\documentclass[journal,draftcls,onecolumn,12pt,twoside,english]{IEEEtranTCOM}
\usepackage[T1]{fontenc}
\usepackage[fleqn]{amsmath}
\usepackage{graphicx}
\usepackage{amssymb}
\usepackage{amsfonts}
\usepackage{amscd}
\usepackage{multirow}
\usepackage{color}
\usepackage{babel}
\usepackage{subfigure}
\usepackage{algorithm}
\usepackage{algorithmicx}
\usepackage{listings}
\usepackage{epsfig}
\usepackage{algpseudocode}
\usepackage{array}
\usepackage{amsmath}
\usepackage{perpage}
\MakePerPage{footnote}
\newtheorem{theorem}{Theorem}
\newtheorem{lemma}[theorem]{Lemma}

\setcounter{MaxMatrixCols}{30}
\providecommand{\U}[1]{\protect\rule{.1in}{.1in}}
\makeatletter

\addto\captionsenglish{}
\@ifundefined{definecolor}
{
}{}

\newtheorem{rem}{Remark}[section]

\makeatother
\begin{document}
\IEEEoverridecommandlockouts

\title{Energy Harvesting Broadband Communication Systems with Processing Energy Cost}
\author{\IEEEauthorblockN{Oner Orhan, Deniz G{\"u}nd{\"u}z,\emph{ Senior Member, IEEE,} and Elza Erkip,\emph{ Fellow, IEEE}}

}

\maketitle

\begin{abstract}
Communication over a broadband fading channel powered by an energy harvesting transmitter is studied. Assuming non-causal knowledge of energy/data arrivals and channel gains, optimal transmission schemes are identified by taking into account the energy cost of the processing circuitry as well as the transmission energy. A constant processing cost for each active sub-channel is assumed. Three different system objectives are considered: 1) throughput maximization, in which the total amount of transmitted data by a deadline is maximized for a backlogged transmitter with a finite capacity battery; 2) energy maximization, in which the remaining energy in an infinite capacity battery by a deadline is maximized such that all the arriving data packets are delivered; 3) transmission completion time minimization, in which the delivery time of all the arriving data packets is minimized assuming infinite size battery. For each objective, a convex optimization problem is formulated, the properties of the optimal transmission policies are identified, and an algorithm which computes an optimal transmission policy is proposed. Finally, based on the insights gained from the offline optimizations, low-complexity online algorithms performing close to the optimal dynamic programming solution for the throughput and energy maximization problems are developed under the assumption that the energy/data arrivals and channel states are known causally at the transmitter.
\end{abstract}

\begin{IEEEkeywords}
Offline power optimization, throughput maximization, remaining energy maximization, transmission completion time minimization, online algorithms.
\end{IEEEkeywords}

\section{Introduction}
\label{introduction}
Wireless sensor nodes are typically designed to have low cost and small size. These design objectives impose restrictions on the capacity and efficiency of the energy storage units that can be used. As a result, continuous operation of the sensor network requires frequent battery replacements, which increases the maintenance cost. Energy harvesting (EH) devices are able to overcome these challenges by collecting energy from the environment. However, due to the nature of the ambient energy sources, the amount of useful energy that can be harvested is limited and unreliable. Consequently, optimal management of the harvested energy becomes a new challenge for EH wireless nodes.

In most communications literature the energy cost of operating transmitter circuitry, such as digital-to-analog converters, mixers, filters, etc. is ignored. In short range communications, as in most wireless sensor networks, where inter-node distances are less than 10m, processing energy consumption can be comparable to the transmission energy \cite{bahai}. When the processing cost is negligible, increasing the transmission time and lowering the transmission power increases the energy efficiency (nats-per-joule), provided the rate-power function is monotonically increasing and concave, properties satisfied by most common transmission schemes as well as Shannon's capacity function. However, as shown in \cite{glue}, when processing cost is taken into account, bursty transmissions separated by ``sleep'' periods become optimal. In EH communication systems, this affects the optimal power allocation scheme considerably since both the power allocation and the sleep intervals will depend on the energy arrival profile.

In this paper, we consider an EH transmitter with processing cost communicating over a broadband fading channel, modelled as $K$ parallel sub-channels with each sub-channel having independent fading. Following the power consumption model in \cite{glue} and \cite{Li}, processing energy cost is modelled as a function of the transmission bandwidth and time and is assumed to be equal to a constant value for each sub-channel. We characterize optimal transmission policies for three different system objectives under the offline optimization framework which assumes that all channel gains and the sizes of arriving energy and  data packets are known non-causally before transmission starts. First, we only consider energy packet arrivals over time for a backlogged transmitter\footnote{A backlogged transmitter is the one that always has data available for transmission.} with a finite capacity battery, and we maximize the amount of total data delivered by a deadline $T$. We call this the \emph{throughput maximization problem} \cite{finite}. Throughput maximization is an important objective for high data rate applications. Then, we consider both data and energy arrivals over time and an unlimited battery, and maximize the remaining energy in the battery by the deadline. This is the \emph{energy maximization problem} \cite{elza2} most suitable for energy efficient, green applications. Finally, for the joint energy and data arrival scenario we also find the minimum delivery time of all the data packets. This is called the \emph{transmission completion time (TCT) minimization problem} \cite{Yang2012}, is important for delay limited applications. For each of these problems we identify the structure of the optimal transmission policy by solving a convex optimization problem, and based on this structure we provide an algorithm which finds the optimal transmission policy.

We next consider a more realistic model assuming only the causal knowledge of energy/data arrivals and channel gains, and study the online optimization problem. Since the optimal solution of the online optimization problem based on dynamic programming is prohibitively complex, we propose simple algorithms for the throughput and energy maximization problems based on the insights gained from the optimal solutions of the corresponding offline optimization problems.

In recent years, optimal transmission policies for EH communication systems have been studied extensively under various assumptions regarding the knowledge at the transmitter about the energy harvesting process. Within the offline optimization framework optimal transmission policies have been investigated for point-to-point \cite{Yang2012}-\cite{fade} and various multi-user communication scenarios, including broadcast channel \cite{broad}, \cite{broad2}, \cite{deniz2}, interference channel \cite{yener2} and two-hop networks \cite{deniz}, \cite{oner}. In addition, battery imperfections in terms of leakage, finite energy storage capacity, and energy storing and retrieving losses are investigated in \cite{deniz2}, \cite{finite}, and \cite{kaya}, respectively.

Online optimization of EH communication systems has also received considerable interest. Optimal transmission policies for EH nodes based on Markov decision processes are studied \cite{Jing}-\cite{Kashef}. In \cite{fade,kaya,process}, heuristic online policies are presented. A more practically oriented learning-theoretic approach to EH system optimization is studied in \cite{Blasco-Gunduz-Dohler}. See \cite{Gunduz-ComMag} for a general overview of EH communication systems under offline, online and learning-theoretic frameworks.

The effect of processing cost on EH communication systems have been investigated in \cite{process}-\cite{elza}. Optimal transmission policies that maximize the average throughput are studied for a constant single-link in \cite{process}-\cite{nossek}, and parallel channels in \cite{process}. In \cite{Gregori}, the throughput maximization problem is studied for a time-slotted system using suboptimal slot selection and power allocation. Our previous work \cite{elza} and \cite{elza2} consider a narrowband fading channel with processing cost and study the throughput maximization and energy maximization problems. The current paper extends all the prior literature by considering a broadband fading EH communication system with processing cost.

In the next section, we describe the system model. In Section \ref{preliminaries}, we summarize the glue-pouring algorithm which provides the optimal power allocation strategy in a battery operated communication system when the processing energy cost is taken into account \cite{glue}. We investigate the structure of the optimal offline transmission policies and provide directional glue-pouring interpretations for the throughput maximization, energy maximization and the TCT minimization problems in Section \ref{throughput maximization}, \ref{energy maximization}, and \ref{time minimization}, respectively. In Section \ref{online policies}, we propose online algorithms for the throughput and energy maximization problems. In Section \ref{numerical result}, numerical results are presented. Finally, we conclude our paper in Section \ref{conclude}.

\section{System Model}
\label{system}
We consider an EH transmitter communicating over a broadband fading channel modelled as $K$ parallel independently fading sub-channels. Each sub-channel has additive white Gaussian noise (AWGN) with unit variance. The real valued channel gain for sub-channel $k$ at time $t$ is denoted as $\gamma_k(t)$, $k=1,...,K$. Without loss of generality, Shannon capacity, defined as $g(p_k(t))\triangleq \frac{1}{2}\log\left(1+ \gamma_k (t)p_k(t)\right)$ (nats/s/Hz), $k=1,...,K$, is considered as the transmission rate-power function, where $p_k(t)$ is the transmission power of sub-channel $k$ at time $t$.

We assume that finite number of energy and data packets arrive at the transmitter in time interval $[0,T)$ each carrying finite amount of energy and data, respectively. We assume that the energy and data packet arrival times are denoted as $t_0^e=0<t_1^e<t_2^e<\cdots <T$ and $0\leq t_1^b<t_2^b<\cdots <t_n^b<T$, respectively. A rechargeable battery with a finite capacity of $E_{max}$ is available at the transmitter. We assume that the harvested energy is first stored in the battery before being used by the transmitter. Accordingly, the size of an harvested energy packet is less than $E_{max}$ without loss of generality. In addition, we assume that the battery is able to store and preserve the harvested energy without any loss. We also assume that $\gamma_k(t)$ changes at the time instances $0<t_{1,k}^f<t_{2,k}^f<\cdots <T$, and remains constant in between. In order to simplify the problem formulation, all channel changes and energy/data arrival events are combined in a single time series as $t_1=0<t_2<t_3<\cdots <t_I < T$ by allowing zero energy/data arrivals when the channel gain of any sub-channel changes, or the channel gains to remain constant when an energy/data packet arrives. We define an \emph{epoch} as the time interval between two consecutive events. We denote the duration of the $i$'th epoch as $\tau_i\triangleq t_{i+1}-t_{i}$. The size of the energy and data packet arriving at time $t_i$ is referred to as $E_i$ and $B_i$, respectively, and $\gamma_{i,k}$ indicates the channel gain of sub-channel $k$ in epoch $i$.

In addition to the energy used for transmission, we consider the processing energy cost of the transmitter circuitry which models the energy dissipated by the microprocessors, mixers, filters, and converters. Using the system level power consumption model of a wireless transmitter in \cite{Li}, we take into account the dependence of the processing cost on the transmission bandwidth. We assume a processing cost of $\epsilon$ joules per second for a sub-channel simplicity. This constant processing energy per sub-channel, independent of the transmission power, is consumed only during the time the corresponding sub-channel is used.

Using optimality of constant power transmission within each epoch \cite{Yang2012}, we denote the non-negative transmission power within epoch $i$ of sub-channel $k$ as $p_{i,k}$. As argued in \cite{glue}, due to the processing cost it may not be optimal to transmit continuously, i.e., bursty transmission can be optimal. Therefore, we denote transmission duration of $p_{i,k}$ as $\Theta_{i,k}$, $0 \leq \Theta_{i,k} \leq \tau_i$. Accordingly, a \textit{transmission policy} refers to power levels $p_{i,k}$ with durations $\Theta_{i,k}$, $\forall k, i$, that determine the energy allocated to each sub-channel $k$ at each epoch $i$. Any feasible transmission policy should satisfy the energy causality constraint:
\begin{eqnarray}
\label{const 1}
\sum_{j=1}^{i}{\sum_{k=1}^{K}{\Theta_{j,k} \left(p_{j,k}+\epsilon\right)}}\leq \sum_{j=1}^{i}{E_{j}}, \quad i=1,...,I.
\end{eqnarray}
Moreover, since increasing the transmission power or duration strictly increases the amount of transmitted data, an optimal transmission policy must avoid battery overflows by utilizing all the harvested energy. Therefore, an optimal transmission policy must also satisfy the following battery overflow constraint:
\begin{eqnarray}
\label{const 2}
\sum_{j=1}^{i+1}{E_{j}}-\sum_{j=1}^{i}{\sum_{k=1}^{K}{\Theta_{j,k} \left(p_{j,k}+\epsilon\right)}}\leq E_{max},  \quad i=1,...,I.
\end{eqnarray}

Data arrivals over time also impose data causality constraints on the feasible transmission policy as follows:
\begin{eqnarray}
\label{const 3}
\sum_{j=1}^{i}{\sum_{k=1}^{K}{\frac{\Theta_{j,k}}{2}\log\left(1+\gamma_{j,k}p_{j,k}\right)}}\leq \sum_{j=1}^{i}{B_{j}}, \quad i=1,...,I.
\end{eqnarray}

In Sections \ref{throughput maximization}-\ref{time minimization}, we identify the optimal \emph{offline transmission policies}, in which all energy/data arrivals and channel gains are known before transmission starts, for three different system objectives stated below. Mathematical formulations are deferred to Sections \ref{throughput maximization}, \ref{energy maximization}, and \ref{time minimization}.
\begin{itemize}
\item \textit{Throughput maximization:} Assuming that the transmitter has sufficient data in its data buffer before transmission starts, i.e., backlogged system with $B_1=\infty$, $B_i=0$, $i=2,...,I$, we maximize the total amount of data delivered by the deadline $T$.
\item \textit{Energy maximization:} Relaxing the battery size constraint, i.e., $E_{max}\rightarrow \infty$, we maximize the remaining energy in the battery by the deadline $T$ while guaranteeing that all the arriving data is delivered to the destination.
\item \textit{TCT minimization:} We minimize the delivery time of all data packets arriving at the transmitter while assuming an infinite size battery, i.e., $E_{max}\rightarrow \infty$.
\end{itemize}
In addition, we consider \emph{online transmission policies} in which we assume that all energy/data arrivals and channel gains are known causally for throughput and energy maximization problems in Section \ref{online policies}.

\section{Preliminaries}\label{preliminaries}
For ease of exposure, we first illustrate the optimal transmission policy for throughput maximization for $I=1$, $K=1$. This models a battery operated system. For a single energy arrival $E_1$ at time $t_1=0$, a channel state $\gamma$ and processing cost $\epsilon$, for $T\rightarrow \infty$, maximum throughput is given by the solution of the following optimization problem:
\begin{eqnarray}\label{prob 11}
\underset{\Theta, p: \Theta(p+\epsilon) \leq E_1}{\operatorname{max}}\; \; \frac{\Theta}{2}\log(1+\gamma p),
\end{eqnarray}
where $\Theta$ is the total transmission duration and $p$ is the transmission power. The corresponding optimal transmission power $p^*$ \cite{glue} satisfies
\begin{eqnarray}\label{eq 3}
\frac{1}{\frac{1}{\gamma}+p^*}=\frac{1}{\epsilon + p^*}{\log(1+\gamma p^*)}.
\end{eqnarray}
The above equation has only one solution for the optimal power level $p^*$ which is given by (11) in \cite{glue}. Note that $p^*$ increases as the channel gain $\gamma$ decreases\footnote[2]{This follows from (\ref{eq 3}) by taking the derivative of $p^*$ with respect to $\gamma$.}. Moreover, $p^*$ does not depend on the available energy $E_1$. For finite transmission deadline $T$, if $T \geq \frac{E_1}{p^*+\epsilon}$, then the above solution is still optimal. On the other hand, if $T<\frac{E_1}{p^*+\epsilon}$, transmitting at power $p^*$ cannot be optimal because some energy would remain in the battery at time $T$. In this case, we can increase the throughput by increasing the transmission power so that all the available energy is consumed by time $T$, and the optimal transmission power is given by $\frac{E_1}{T}-\epsilon$.\footnote[3]{A similar observation is made in \cite{deniz2} where constant rate battery leakage is considered instead of processing cost. This correspondence does not extend to multiple energy packets or fading channels as will be seen later in the paper.}

In the case of multiple fading levels, again for single sub-channel $K=1$, single energy arrival $E_1$ and no transmission deadline ($T \rightarrow \infty$), the optimal transmission policy is given by the \emph{glue-pouring algorithm} \cite{glue}. For two fading levels $\gamma_{1}>\gamma_{2}$ with durations $\tau_1$, $\tau_2$, respectively, the glue-pouring solution is summarized below. In the following, $\Theta_1$ and $\Theta_2$ denote the transmission durations for epochs with fading levels $\gamma_{1}$ and $\gamma_{2}$, and $p_1^*$ and $p_2^*$ denote the solutions of (\ref{eq 3}) for channel gains $\gamma_{1}$ and $\gamma_{2}$, respectively.
\begin{itemize}
\item If $E_1\leq \tau_1(p_1^*+\epsilon)$, then the optimal transmission policy is $\Theta_1=\frac{E_1}{p_1^*+\epsilon}$ and $\Theta_2=0$ with power levels $p_1^*$ and $0$, respectively.
\item If $\tau_1(p_1^*+\epsilon)<E_1\leq \tau_1(p_2^*+\frac{1}{\gamma_{2}}-\frac{1}{\gamma_{1}}+\epsilon)$, then the optimal transmission policy is $\Theta_1=\tau_1$ and $\Theta_2=0$ with power levels $\frac{E_1}{\tau_1}-\epsilon$ and $0$, respectively.
\item If $\tau_1(p_2^*+\frac{1}{\gamma_{2}}-\frac{1}{\gamma_{1}}+\epsilon)<E_1 \leq \tau_1(p_2^*+\frac{1}{\gamma_{2}}-\frac{1}{\gamma_{1}}+\epsilon)+\tau_2(p_2^*+\epsilon)$, then the optimal transmission policy is $\Theta_1=\tau_1$ and $\Theta_2=\frac{E_1-\tau_1(p_2^*+\frac{1}{\gamma_{2}}-\frac{1}{\gamma_{1}}+\epsilon)}{p_2^*+\epsilon}$ with power levels $p_2^*+\frac{1}{\gamma_{2}}-\frac{1}{\gamma_{1}}$ and $p_2^*$, respectively.
\item If $\tau_1(p_2^*+\frac{1}{\gamma_{2}}-\frac{1}{\gamma_{1}}+\epsilon)+\tau_2(p_2^*+\epsilon)<E_1$, then the optimal transmission policy is obtained through the classical waterfilling algorithm.
\end{itemize}

Based on the above solution, for the general system model with $K$ sub-channels \emph{glue level} in epoch $i$ of sub-channel $k$ is defined as the sum of the transmission power and the inverse channel gain in that epoch, i.e., $\frac{1}{\gamma_{i,k}}+p_{i,k}$.

\section{Throughput Maximization}
\label{throughput maximization}
In this section, we consider the throughput maximization problem introduced in Section \ref{system}, that is, we maximize the total delivered data until the deadline $T$. We assume that $B_1= \infty$ and $B_i=0$, $i=2,...,I$, and the last event corresponds to the transmission deadline, i.e., $t_{I+1}=T$. Mathematically, the problem can be formulated as follows.
\begin{subequations}
\label{prob 2c}
\begin{eqnarray}\label{prob 2c:1}
\underset{\alpha_{i,k},\Theta_{i,k}}{\operatorname{max}} \hspace{-0.2in} && \sum_{i=1}^{I}{\sum_{k=1}^{K}{\frac{\Theta_{i,k}}{2} \log\left(1+\gamma_{i,k}\frac{\alpha_{i,k}}{\Theta_{i,k}}\right)}} \\\label{prob 2c:2}
\text{s.t.} && \sum_{j=1}^{i}{\sum_{k=1}^{K}{\left(\alpha_{j,k}+\Theta_{j,k}\epsilon\right)}}-\sum_{j=1}^{i}{E_{j}} \leq 0, \quad \forall i, \\\label{prob 2c:3}
&&\sum_{j=1}^{i+1}{E_{j}}-\sum_{j=1}^{i}{\sum_{k=1}^{K}{\left(\alpha_{j,k}+\Theta_{j,k}\epsilon\right)}}\leq E_{max}, \forall i, \\ \label{prob 2c:5}
&& 0\leq \Theta_{i,k} \leq \tau_i, \quad \text{and} \quad 0\leq \alpha_{i,k},\quad \forall i, ~\forall k,
\end{eqnarray}
\end{subequations}
where we have defined $\alpha_{i,k}\triangleq\Theta_{i,k}p_{i,k}$, for $i=1,...,I$ and $k=1,...,K$. Notice that $\alpha_{i,k}$ is equivalent to the total allocated transmission energy to epoch $i$ of sub-channel $k$. In the above optimization problem, the constraints in (\ref{prob 2c:2}) and (\ref{prob 2c:3}) are due to the energy causality and battery overflow constraints in (\ref{const 1}) and (\ref{const 2}), respectively. The term $\frac{\Theta_{i,k}}{2}\log\left(1+\gamma_{i,k}\frac{\alpha_{i,k}}{\Theta_{i,k}}\right)$ is the perspective function of the concave function $\frac{1}{2}\log\left(1+\gamma_{i,k}\alpha_{i,k}\right)$. Here, we take $\frac{\Theta_{i,k}}{2}\log\left(1+\gamma_{i,k}\frac{\alpha_{i,k}}{\Theta_{i,k}}\right)=0$ when $\Theta_{i,k}=0$. Since perspective operation preserves concavity \cite{Boyd}, the objective function in (\ref{prob 2c:1}) is concave. In addition, the constraints in (\ref{prob 2c:2})-(\ref{prob 2c:5}) are linear. Therefore, the optimization problem in (\ref{prob 2c}) is convex, and efficient numerical solutions exists \cite{Boyd}.

The optimal allocated transmission energy $\alpha_{i,k}^*$ to epoch $i$ of sub-channel $k$, and the corresponding optimal transmission duration $\Theta_{i,k}^*$, for $i=1,...,I$ and $k=1,...,K$, must satisfy the following KKT conditions:
\begin{eqnarray}
\label{der 1c}
\frac{\partial \mathcal{L}}{\partial \alpha_{i,k}}&\hspace{-0.15in}=&\hspace{-0.15in}\frac{\Theta_{i,k}^* \gamma_{i,k}}{2(\Theta_{i,k}^* +\gamma_{i,k} \alpha_{i,k}^*)}- \sum_{j=i}^{I}{(\lambda_j-\mu_j)} + \sigma_{i,k}=0, \\
\label{der 2c}
\frac{\partial \mathcal{L}}{\partial \Theta_{i,k}}&\hspace{-0.15in}=&\hspace{-0.15in}\frac{1}{2}\log\left(1+\frac{\gamma_{i,k} \alpha_{i,k}^*}{\Theta_{i,k}^*}\right)- \frac{\gamma_{i,k} \alpha_{i,k}^*}{2(\Theta_{i,k}^* + \gamma_{i,k} \alpha_{i,k}^*)}-\epsilon \sum_{j=i}^{I}{(\lambda_j-\mu_j)}-\phi_{i,k}+ \psi_{i,k}=0,
\end{eqnarray}
for $i=1,...,I$ and $k=1,...,K$. Here $\mathcal{L}$ is the Lagrangian of (\ref{prob 2c}) with $\lambda_i\geq 0$, $\mu_i\geq 0$, $\phi_{i,k}\geq 0$, $\psi_{i,k}\geq 0$, and $\sigma_{i,k}\geq 0$ as Lagrange multipliers for constraints in (\ref{prob 2c:2})-(\ref{prob 2c:5}), respectively.
The complementary slackness conditions are
\begin{eqnarray}
\hspace{-0.2in}{\lambda_i \left(\sum_{j=1}^{i}{\sum_{k=1}^{K}{\left(\alpha_{j,k}^*+\Theta_{j,k}^*\epsilon\right)}}-\sum_{j=1}^{i}{E_{j}}\right)}&\hspace{-0.15in}=&\hspace{-0.15in} 0,  \forall i, \label{comp 1c:1} \\
\hspace{-0.2in}{\mu_i \left(\sum_{j=1}^{i+1}{E_{j}}-\sum_{j=1}^{i}{\sum_{k=1}^{K}{\left(\alpha_{j,k}^*+\Theta_{j,k}^*\epsilon\right)}}-E_{max}\right)}&\hspace{-0.15in}=&\hspace{-0.15in} 0, \forall i, \label{comp 1c:2} \\ \label{comp 1c:3}
\hspace{-0.2in}{\phi_{i,k} (\Theta_{i,k}^*-\tau_{i})} = 0, ~ {\psi_{i,k} \Theta_{i,k}^*} = 0, ~  {\sigma_{i,k} \alpha_{i,k}^*}=0, &&\hspace{-0.25in} \forall i, \forall k.
\end{eqnarray}

We next identify some properties of an optimal transmission policy for the throughput maximization problem based on the KKT conditions in (\ref{der 1c})-(\ref{comp 1c:3}) which are both necessary and sufficient due to the convexity of the optimization problem in (\ref{prob 2c}):

\begin{itemize}
\item If $\Theta_{i,k}^*=0$ or $\alpha_{i,k}^*=0$, then the optimal transmission power $p_{i,k}^*$ must be zero.
\item If $0<\Theta_{i,k}^*\leq \tau_i$ and $\alpha_{i,k}^* >0$, then $\psi_{i,k}= \sigma_{i,k}=0$ due to the complementary slackness conditions in (\ref{comp 1c:3}). Therefore we can compute the optimal transmission power in terms of $\lambda_i$ and $\mu_i$ as follows:
\begin{eqnarray}
\label{eq 5c}
 p_{i,k}^* =\left[\frac{1}{2\sum_{j=i}^{I}{(\lambda_j-\mu_j)}}-\frac{1}{\gamma_{i,k}}\right]^+,
\end{eqnarray}
which is obtained by substituting  $\alpha_{i,k}^*=\Theta_{i,k}^* p_{i,k}^*$ into (\ref{der 1c}). By combining (\ref{der 1c}) and (\ref{der 2c}) we can obtain
\begin{eqnarray}
\label{eq 3c}
\log\left(1+\frac{\gamma_{i,k} \alpha_{i,k}^*}{\Theta_{i,k}^*}\right)=\frac{\gamma_{i,k}(\alpha_{i,k}^* + \epsilon \Theta_{i,k}^*)}{\Theta_{i,k}^* +\gamma_{i,k} \alpha_{i,k}^*}+2\phi_{i,k}.
\end{eqnarray}
When we replace $\alpha_{i,k}^*$ in (\ref{eq 3c}) with $\Theta_{i,k}^* p_{i,k}^*$, we obtain
\begin{eqnarray}
\label{eq 2c}
\log\left(1+\gamma_{i,k} p_{i,k}^*\right)= \frac{p_{i,k}^*+\epsilon}{\frac{1}{\gamma_{i,k}}+p_{i,k}^*}+2\phi_{i,k}.
\end{eqnarray}

Note that when $0<\Theta_{i,k}^* < \tau_i$, i.e., $\phi_{i,k}=0$, (\ref{eq 2c}) is equivalent to (\ref{eq 3}). Therefore, it has a unique solution for given $\gamma_{i,k}$ and $\epsilon$. We denote the solution of (\ref{eq 2c}) when $0<\Theta_{i,k}^* < \tau_i$ as $p_{i,k}^*=v_{i,k}^*$. Since (\ref{eq 2c}) depends only on $\gamma_{i,k}$ and $\epsilon$, we can compute the optimal transmission power directly without solving the optimization problem in (\ref{prob 2c}).
When $\Theta_{i,k}^* = \tau_i$, i.e., $\phi_{i,k}\geq 0$, it can be argued from (\ref{eq 2c}) that the optimal transmission power $p_{i,k}^*$ must satisfy $p_{i,k}^*\geq v_{i,k}^*$.
\end{itemize}

\begin{rem}
\label{remark 2 rev}
When there is no processing cost, i.e., $\epsilon=0$ and $\alpha_{i,k}^*>0$, $\Theta_{i,k}^*=\tau_i$, and when $\alpha_{i,k}^*=0$, $\Theta_{i,k}^*=0$. To see this suppose $0< \Theta_{i,k}^* < \tau_i$ and $\alpha_{i,k}^*>0$. In this case we can argue that (\ref{eq 2c}) leads to $p_{i,k}^*=v_{i,k}^*=0$ when $\epsilon=0$. However this contradicts with the assumption on $\alpha_{i,k}^*>0$, since $\alpha_{i,k}^*=\Theta_{i,k}^* p_{i,k}^*=0$. Therefore, when $\epsilon=0$, there is no bursty transmission. Consequently the optimal transmission policy for $\epsilon=0$ leads to the classical water-filling over sub-channels \cite{infotheory}.
\end{rem}

\begin{lemma}
\label{lemma 1c}
In the optimal transmission policy, whenever the glue level in sub-channel $k$, i.e., $\frac{1}{\gamma_{i,k}}+p_{i,k}$, decreases (increases) from one epoch to the next, the battery must be full (empty).
\end{lemma}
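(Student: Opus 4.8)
The plan is to read the statement directly off the stationarity condition (\ref{eq 5c}) by tracking how the common water level of an epoch is tied to the multipliers of the energy-causality and battery-overflow constraints. I would introduce the shorthand $S_i \triangleq \sum_{j=i}^{I}(\lambda_j-\mu_j)$. On any epoch where sub-channel $k$ is active, i.e.\ $0<\Theta_{i,k}^*\le\tau_i$ and $\alpha_{i,k}^*>0$, the slackness condition $\sigma_{i,k}\alpha_{i,k}^*=0$ forces $\sigma_{i,k}=0$, so (\ref{der 1c}) together with $\alpha_{i,k}^*=\Theta_{i,k}^*p_{i,k}^*$ gives $\frac{\gamma_{i,k}}{2(1+\gamma_{i,k}p_{i,k}^*)}=S_i$, equivalently the glue level $\frac{1}{\gamma_{i,k}}+p_{i,k}^*=\frac{1}{2S_i}$. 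Two facts should be recorded here: this value does not depend on $k$, so all active sub-channels in epoch $i$ share the single water level $\frac{1}{2S_i}$; and since the left-hand side above is strictly positive, $S_i>0$ on every active epoch, which keeps the glue level finite and positive and lets the subsequent comparisons of $\frac{1}{2S_i}$ be inverted into comparisons of $S_i$.

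The decisive algebraic observation is the telescoping identity $S_i-S_{i+1}=\lambda_i-\mu_i$, which couples one change of water level to exactly one pair of multipliers. Suppose first the glue level strictly decreases from epoch $i$ to $i+1$ on an (active) sub-channel $k$; since the two levels are $\frac{1}{2S_i}$ and $\frac{1}{2S_{i+1}}$, this means $\frac{1}{2S_i}>\frac{1}{2S_{i+1}}$, hence $S_{i+1}>S_i$ and $\lambda_i-\mu_i=S_i-S_{i+1}<0$. As $\lambda_i\ge 0$, this forces $\mu_i>0$, and the complementary slackness condition (\ref{comp 1c:2}) then makes the overflow constraint (\ref{prob 2c:3}) tight at index $i$, i.e.\ the battery is full at $t_{i+1}$. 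Symmetrically, a strict increase gives $S_i>S_{i+1}$, so $\lambda_i-\mu_i>0$ and thus $\lambda_i>0$; complementary slackness (\ref{comp 1c:1}) makes the causality constraint (\ref{prob 2c:2}) tight, i.e.\ the battery is empty at $t_{i+1}$. This delivers both halves of the statement.

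The step requiring the most care, and the only real obstacle, is the bookkeeping when sub-channel $k$ is active in only one of the two consecutive epochs, since then the clean identity ``glue level $=\frac{1}{2S_i}$'' holds on only one side: if $p_{i,k}^*=0$ the glue level collapses to $\frac{1}{\gamma_{i,k}}$ and (\ref{eq 5c}) yields only the one-sided bound $\frac{1}{2S_i}\le\frac{1}{\gamma_{i,k}}$, so the sign of the glue-level change is no longer a pure function of $S_i$. My plan is to emphasize that the hypothesis is a change of the \emph{transmitted} water level, whose intrinsic content lives on epochs where the sub-channel is active in both, and there the argument above is exact and independent of $k$. For an active-to-inactive (or inactive-to-active) transition I would pin the direction of the change using the activation inequality $\frac{1}{\gamma}\gtrless\frac{1}{2S}$ on the inactive side and check it against the sign of $\lambda_i-\mu_i$ inherited from $S_i-S_{i+1}$, so that the same complementary slackness condition applies; the doubly inactive case carries no glue level to compare and is vacuous.
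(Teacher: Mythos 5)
Your proposal is correct and takes essentially the same route as the paper: both read the glue level of every active sub-channel off the stationarity condition (\ref{eq 5c}) as $\frac{1}{2\sum_{j\geq i}(\lambda_j-\mu_j)}$, and both convert the sign of $\lambda_i-\mu_i$ into a tight energy-causality or battery-overflow constraint via the complementary slackness conditions (\ref{comp 1c:1})--(\ref{comp 1c:2}). Your write-up is in fact somewhat more careful than the paper's own proof, which makes the telescoping step only implicitly (``increase in the denominator''), invokes mutual exclusivity of the two constraints instead of your direct deduction of $\mu_i>0$ from $\lambda_i\geq 0$, and silently restricts attention to epochs with non-zero allocated energy---the same restriction under which your core argument is exact.
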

\begin{proof}
The optimal transmission power satisfies (\ref{eq 5c}) whenever a non-zero transmission energy is allocated to epoch $i$ of sub-channel $k$, $i\in\{1,...,I\}$ and $k\in\{1,...,K\}$. In addition, from the complementary slackness conditions (\ref{comp 1c:1})-(\ref{comp 1c:2}), we can argue that the battery is empty whenever $\lambda_i>0$ and $\mu_i=0$, and the battery is full whenever $\lambda_i=0$ and $\mu_i>0$. This is because whenever the constraint in (\ref{prob 2c:2}) is satisfied with equality, i.e., $\lambda>0$, the constraint in (\ref{prob 2c:3}) cannot be satisfied with equality, i.e., $\mu=0$, and vice versa. From (\ref{eq 5c}) we see that $\frac{1}{\gamma_{i,k}}+p_{i,k}>\frac{1}{\gamma_{i+1,k}}+p_{i+1,k}$ implies $\lambda_i=0$ and $\mu_i>0$, since $\lambda_i=0$ and $\mu_i>0$ leads to an increase in the denominator of RHS of (\ref{eq 5c}). Similarly, $\frac{1}{\gamma_{i,k}}+p_{i,k}<\frac{1}{\gamma_{i+1,k}}+p_{i+1,k}$ implies $\lambda_i>0$ and $\mu_i=0$. Therefore, we can conclude that whenever the glue level in sub-channel $k$, $k\in\{1,...,K\}$, decreases (increases) from one epoch to the next, the battery must be full (empty).
\end{proof}

\begin{lemma}
\label{lemma 1d}
In the optimal transmission policy, the glue levels in an epoch are the same for all sub-channels to which non-zero transmission energy is allocated.
\end{lemma}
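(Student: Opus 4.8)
The plan is to read the claim directly off the stationarity condition already established in (\ref{eq 5c}). First I would fix an arbitrary epoch $i$ and consider any two sub-channels $k$ and $k'$ to which non-zero transmission energy is allocated, i.e., $\alpha_{i,k}^*>0$ and $\alpha_{i,k'}^*>0$. Since $\alpha_{i,k}^*=\Theta_{i,k}^* p_{i,k}^*$, a positive allocated energy forces both $\Theta_{i,k}^*>0$ and $p_{i,k}^*>0$; by the complementary slackness conditions in (\ref{comp 1c:3}) this immediately yields $\psi_{i,k}=\sigma_{i,k}=0$, so that the expression (\ref{eq 5c}) for the optimal power is valid for sub-channel $k$ in epoch $i$, and likewise for $k'$.

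Next I would observe that because $p_{i,k}^*>0$ the projection $[\cdot]^+$ in (\ref{eq 5c}) is inactive, so the glue level satisfies
\begin{eqnarray*}
\frac{1}{\gamma_{i,k}}+p_{i,k}^* = \frac{1}{2\sum_{j=i}^{I}(\lambda_j-\mu_j)}.
\end{eqnarray*}
The crucial point is that the right-hand side depends on the epoch index $i$ only, through the Lagrange multipliers $\lambda_j,\mu_j$, and carries no dependence on the sub-channel index $k$. The same identity holds verbatim for sub-channel $k'$, and equating the two expressions gives $\frac{1}{\gamma_{i,k}}+p_{i,k}^* = \frac{1}{\gamma_{i,k'}}+p_{i,k'}^*$, which is exactly the assertion that all active sub-channels in epoch $i$ share a common glue level.

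In effect the lemma is the parallel-channel analogue of classical water-filling: the KKT stationarity in $\alpha_{i,k}$ ties every active sub-channel in an epoch to a single common water (glue) level set by the dual variables of that epoch. I do not expect a substantive obstacle here, since the result is essentially a direct reading of (\ref{eq 5c}). The only point requiring care is the handling of the $[\cdot]^+$ operator: one must confirm that for every sub-channel with $\alpha_{i,k}^*>0$ the clipping is inactive, which is guaranteed precisely by $p_{i,k}^*>0$; otherwise a sub-channel with $p_{i,k}^*=0$ would sit below, rather than at, the common level. Restricting attention to sub-channels with non-zero allocated energy, as the statement does, is exactly what removes this edge case.
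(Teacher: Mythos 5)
Your proof is correct and follows essentially the same route as the paper: both read the common glue level directly off the stationarity condition (\ref{eq 5c}), observing that the quantity $\frac{1}{2\sum_{j=i}^{I}(\lambda_j-\mu_j)}$ depends only on the epoch index $i$ and not on the sub-channel index $k$, so all sub-channels with non-zero allocated energy must share it. Your additional care with the $[\cdot]^+$ projection and the deduction $\psi_{i,k}=\sigma_{i,k}=0$ from complementary slackness is a slightly more explicit treatment of details the paper leaves implicit, but it is not a different argument.
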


\begin{proof}
Rearranging (\ref{eq 5c}) we obtain
\begin{eqnarray}
\label{eq 6c}
\frac{1}{\gamma_{i,k}}+p_{i,k}^*=\frac{1}{2\sum_{j=i}^{I}{(\lambda_j-\mu_j)}},
\end{eqnarray}
for $\forall k\in\{k:\alpha_{i,k}^*>0\}$. Note that right hand side of (\ref{eq 6c}) must be the same for all sub-channels in epoch $i$ to which non-zero transmission energy is allocated. Therefore, we can conclude that the glue level in an epoch is the same for all sub-channels with non-zero transmission energy.
\end{proof}

\begin{rem}
\label{remark 1}
It is possible to show that $v_{i,k}^*$, the solution of (\ref{eq 2c}) when $\phi_{i,k}=0$, is a decreasing function of $\gamma_{i,k}$. Since the optimal transmission power in an epoch of sub-channel $k$ must satisfy $p_{i,k}^*\geq v_{i,k}^*$, the optimal transmission policy utilizes epochs with the highest channel gain under the energy causality and battery size constraints.
\end{rem}

\begin{rem}
\label{remark 2}
The optimization problem in (\ref{prob 2c}) may have multiple solutions. Consider a sub-channel with multiple epochs having the same channel gain. In an optimal transmission policy, if these epochs are partially utilized, i.e., $0<\Theta_{i,k}<\tau_i$, then the corresponding optimal transmission power must be equal to $v_{i,k}^*$. Then, the corresponding optimal values for $\frac{\gamma_{i,k}\alpha_{i,k}^*}{\Theta_{i,k}^*}=\gamma_{i,k}v_{i,k}^*$ in (\ref{prob 2c:1}) must also be the same, therefore, we can obtain another transmission policy by transferring some of the energy between these epochs under the energy causality and battery size constraints. Similarly, if an epoch has multiple partially utilized sub-channels having the same channel gain, we can find another optimal transmission policy by transferring energy between these sub-channels.
\end{rem}

\subsection{Directional Backward Glue-Pouring Algorithm}\label{glue algorithm}
The directional backward glue-pouring algorithm, introduced in \cite{elza} for the throughput maximization problem with a single fading channel ($K=1$), is an adaptation of the glue-pouring algorithm in Section \ref{preliminaries} to the EH model, where the energy becomes available over time. Due to the energy causality constraint, harvested energy $E_i$ can only be allocated to epochs $j \geq i$. When $E_i$ energy of amount is transferred to future epochs $j>i$, the constraint (\ref{prob 2c:2}) is satisfied with strict inequality, i.e., $\lambda_i=0$. Then the glue level cannot increase as argued in Lemma \ref{lemma 1c}. Conversely, if there is a glue level increase, that is, if $\lambda_i>0$, then the constraint (\ref{prob 2c:2}) is satisfied with equality, and no energy is transferred to future epochs. In addition, due to battery size constraint, the amount of energy that can be transferred to epoch $j$ is limited by $E_{max}-E_{j}$. When the transferred energy is less than $E_{max}-E_{j}$, the battery size constraint in (\ref{prob 2c:3}) must be satisfied with strict inequality, i.e., $\mu_i=0$, and the glue level does not change as argued in Lemma \ref{lemma 1c}. Conversely, when there is a glue level decrease, that is, if $\mu_i>0$, the amount of transferred energy to the $j$'th epoch is $E_{max}-E_{j}$. Therefore, we can allocate the harvested energy to epochs, starting from the last non-zero energy packet to the first, under the energy causality and battery size constraints. Moreover, the optimal transmission power for different sub-channels of an epoch must have the same glue level while satisfying the condition $p_{i,k}^*\geq v_{i,k}^*$. These suggest that, the optimal transmission policy can be obtained through the directional backward glue-pouring algorithm over the epochs of sub-channels. Accordingly, the optimal transmission policy can be computed as in Table \ref{table:1}.
\begin{table}[ht]
\caption{Directional backward glue-pouring algorithm}
\begin{enumerate}
\item Initialization: Set glue level for epoch $j$, $\xi_j=0$, $j=1,...,I$. Also set $i=I$.
\item Allocate $E_i$ to the subchannels of epoch $i$ using the glue pouring algorithm. Compute the glue level $\xi_i =\frac{1}{\gamma_{i,k}}+p_{i,k}^*$ while satisfying the condition $p_{i,k}^*\geq v_{i,k}^*$ for each subchannel as argued in Lemma \ref{lemma 1d}. Note that Lemma \ref{lemma 1d} guarantees $\frac{1}{\gamma_{i,k}}+p_{i,k}^*$ is the same for all $k=1,...,K$.
\item Set $m=i$. If $m=I$, go to step 6.
\item If the glue level of epoch $m$ is greater than the subsequent epoch $m+1$, i.e., $\xi_m >\xi_{m+1}$, reallocate previously allocated energies to epochs $i,...,m+1$ while satisfying the glue pouring solution within each epoch, such that the transferred energy to epoch $j$, $j \in i+1,...,m+1$, is less than and equal to $E_{max}-E_j$. Note that when the transferred energy to epoch $j$ is less than $E_{max}-E_j$, the glue level of epoch $j$ is equal to the preceding epoch $j-1$, i.e., $\xi_{j-1}=\xi_{j}$ as argued in Lemma \ref{lemma 1c}.
\item If $m=I$, go to step 6. Otherwise, increase $m$ by one, and go to step 4.
\item If $i=1$, stop. Otherwise, decrease $i$ by one and go to step 2.
\end{enumerate}
\centering
\label{table:1}
\end{table}

\begin{figure}[t]
\centering
\subfigure[]{
\includegraphics[scale=0.5,trim= 15 0 0 0]{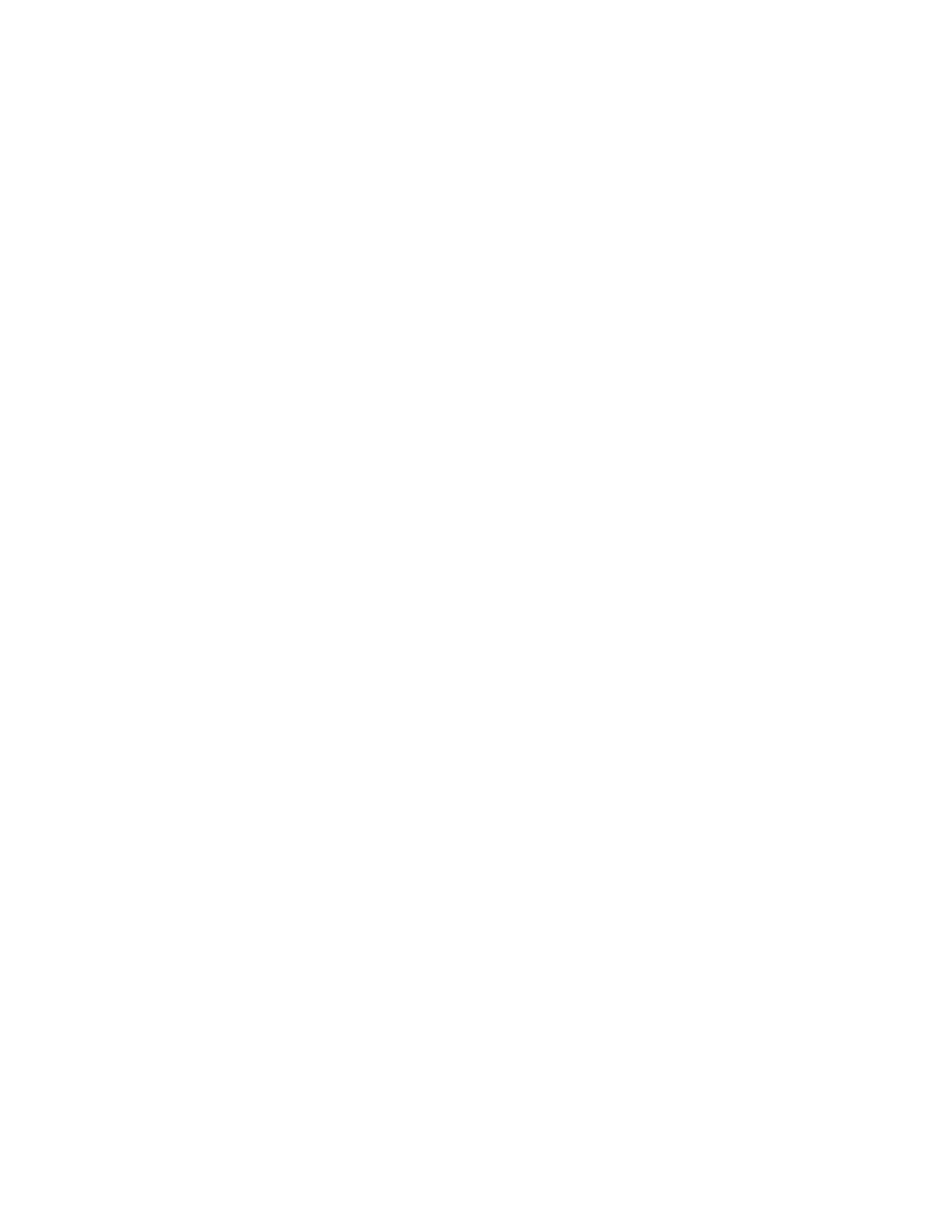}
\label{fig 1:subfig1}
}
\subfigure[]{
\includegraphics[scale=0.5,trim= 15 0 0 0]{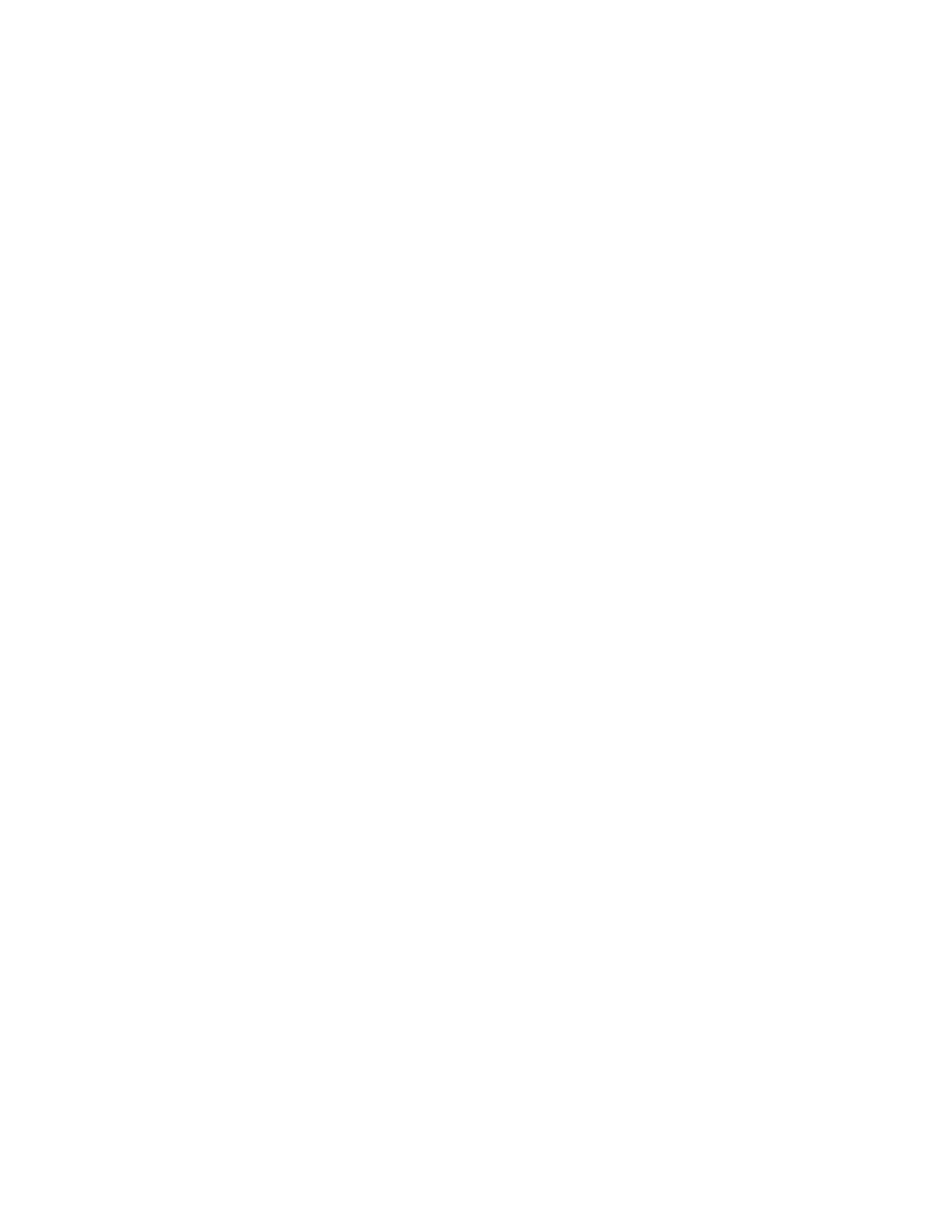}
\label{fig 1:subfig2}
}
\subfigure[]{
\includegraphics[scale=0.5,trim= 15 0 0 0]{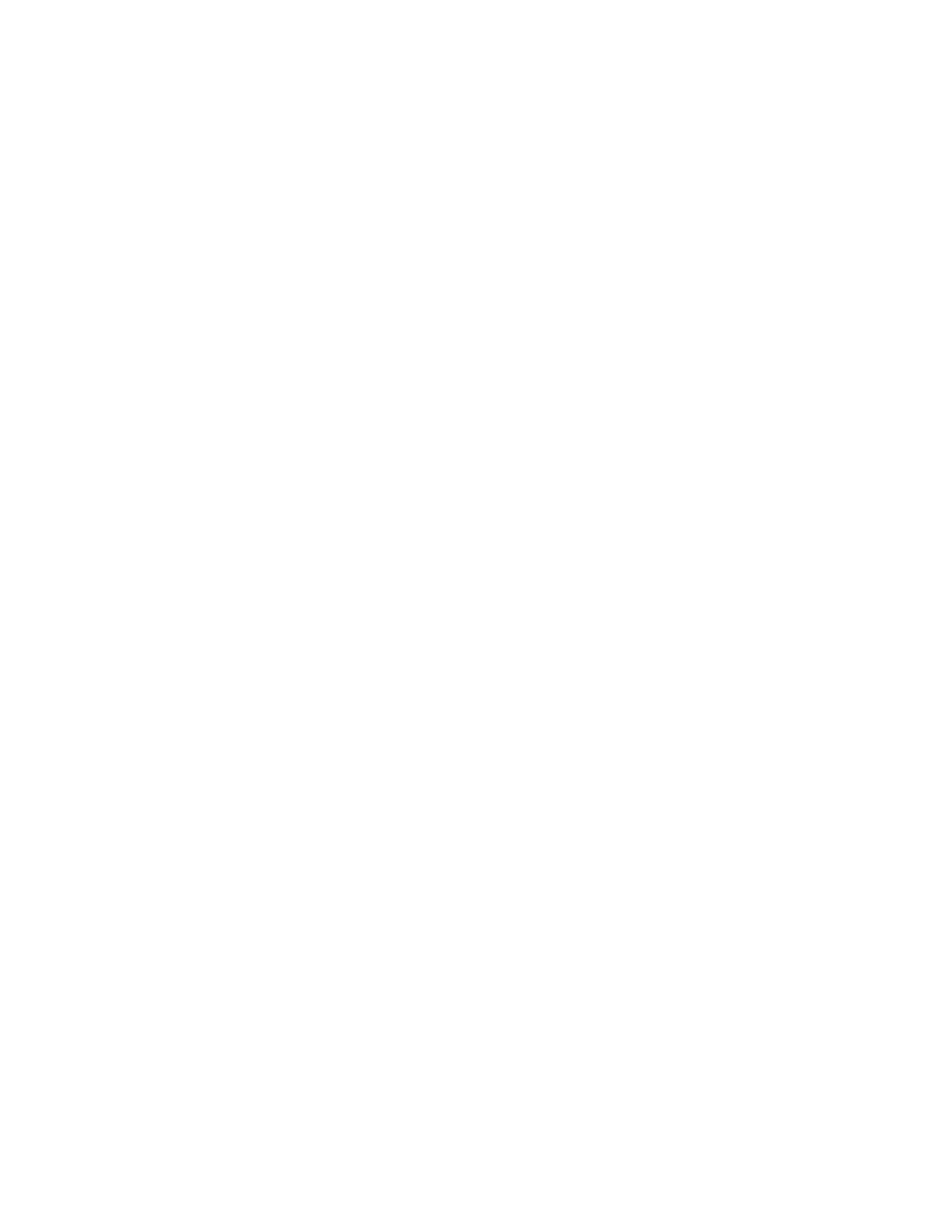}
\label{fig 1:subfig3}
}
\caption{Directional backward glue-pouring algorithm.}
\label{fig 1}%
\end{figure}

To illustrate the directional backward glue-pouring algorithm, consider the example in Fig. \ref{fig 1}. There are two sub-channels ($K=2$) and two fading levels ($I=2$) in each sub-channel. The inverse channel gains $\frac{1}{\gamma_{i,k}}$ are shown as heights of the solid blocks. The dashed lines above the block are used to express optimal power levels $v_{i,k}^*$ for $0<\Theta_{i,k}^*<\tau_i$, such that $v_{i,k}^*$ corresponds to the difference in height between the solid block and the dashed one. We consider two energy arrivals at the beginning of each epoch which are indicated by the downward arrows in the figure. As argued above, the algorithm first allocates power to the second epoch using the last harvested energy $E_2$, as shown in Fig. \ref{fig 1:subfig2}, then considers the first energy packet $E_1$ for the first and second epochs together. The glue levels are the same among the sub-channels for which the condition $p_{i,k}^*\geq v_{i,k}^*$ holds, as shown in Fig. \ref{fig 1:subfig2}. Note that due to the limited battery capacity, the transferable energy from the first epoch to the second is limited by $E_{max}-E_2$, which explains the glue level difference between the first and second epochs in Fig. \ref{fig 1:subfig3}.

\section{Energy Maximization}
\label{energy maximization}
In this section, we study the energy maximization problem introduced in Section \ref{system}, that is, we maximize the remaining energy in the battery by the deadline $T$ such that all the data packets $B_i$, $i=1,...,I$, are delivered. We assume that the last event corresponds to the transmission deadline, i.e., $t_{I+1}=T$, and relax the finite battery size constraint, i.e., $E_{max} \rightarrow \infty$. The optimization problem for the energy maximization can be formulated as follows:
\begin{subequations}
\label{prob 4c}
\begin{eqnarray}\label{prob 4c:1}
\underset{\beta_{i,k},\Theta_{i,k}}{\operatorname{max}} \hspace{-0.28in}&& \sum_{i=1}^{I}{\left(E_{i}-\sum_{k=1}^{K}{\frac{\Theta_{i,k}}{\gamma_{i,k}}\left(e^\frac{2\beta_{i,k}}{\Theta_{i,k}}-1\right)+\Theta_{i,k}\epsilon}\right)} \\\label{prob 4c:2}
 \text{s.t.}\hspace{-0.1in} && \sum_{j=1}^{i}{\sum_{k=1}^{K}{\frac{\Theta_{j,k}}{\gamma_{j,k}}\left(e^\frac{2\beta_{j,k}}{\Theta_{j,k}}-1\right)+\Theta_{j,k}\epsilon}}-\sum_{j=1}^{i}{E_{j}} \leq 0, \forall i,\\\label{prob 4c:3}
&&\sum_{j=1}^{i}{\sum_{k=1}^{K}{\beta_{j,k}}}-\sum_{j=1}^{i}{B_{j}} \leq 0, \quad i=1,...,I-1, \\\label{prob 4c:4}
&&\sum_{j=1}^{I}{B_{j}}-\sum_{j=1}^{I}{\sum_{k=1}^{K}{\beta_{j,k}}} \leq 0, \\\label{prob 4c:5}
&& 0\leq \Theta_{i,k} \leq \tau_i, ~~~ \text{ and } ~~~  0\leq \beta_{i,k}, \quad \forall i, ~~\forall k,
\end{eqnarray}
\end{subequations}
where we have defined $\beta_{i,k}\triangleq \frac{\Theta_{i,k}}{2}\log\left(1+\gamma_{i,k}p_{i,k}\right)$ for $i=1,...,I$ and $k=1,...,K$. Here, $\beta_{i,k}$ can be considered as the total amount of data transmitted within epoch $i$ of sub-channel $k$. In the above optimization problem, (\ref{prob 4c:2}) and (\ref{prob 4c:3}) are due to the energy and data causality constraints in (\ref{const 1}) and (\ref{const 3}), respectively. Constraint (\ref{prob 4c:4}) arises as a result of the delivery requirement of all data packets by the deadline. Note that, the term $\Theta_{i,k}e^{\frac{2\beta_{i,k}}{\Theta_{i,k}}}$ is the perspective function of a strictly convex function $f(\beta_{i,k})=e^{2\beta_{i,k}}$. Here, we take $\Theta_{i,k}e^{\frac{2\beta_{i,k}}{\Theta_{i,k}}}=0$ when $\Theta_{i,k}=0$. Since the perspective operation preserves convexity \cite{Boyd}, the objective function in (\ref{prob 4c:1}) is concave, and the constraint set defined by (\ref{prob 4c:2})-(\ref{prob 4c:5}) is convex. Therefore, the optimization problem in (\ref{prob 4c}) is convex. The constraint set of (\ref{prob 4c}) can be empty due to insufficient harvested energy to deliver all the data packets. Feasibility of (\ref{prob 4c}) can be checked by solving the optimization problem in (\ref{prob 4c}) with a new objective function $-B$, and a new constraint $\sum_{j=1}^{I}{B_{j}}-\sum_{j=1}^{I}{\sum_{k=1}^{K}{\beta_{j,k}}}  \leq B$ replaced with (\ref{prob 4c:4}). Note that $B$ corresponds to the additional amount of data that can be delivered in the last epoch for the given energy profile. If the optimal value of this optimization problem is non-negative, i.e., $B \geq 0$, then the constraint set defined by (\ref{prob 4c:2})-(\ref{prob 4c:5}) has a feasible solution.

The optimal value of the total transmitted data $\beta_{i,k}^*$ and the corresponding transmission duration $\Theta_{i,k}^*$ for epoch $i$ of sub-channel $k$, $i=1,...,I$ and $k=1,...,K$, must satisfy the following KKT conditions:
\begin{eqnarray}\label{der 3c}
\frac{\partial \mathcal{L}}{\partial \beta_{i,k}}&\hspace{-0.15in} =&\hspace{-0.15in} \frac{2}{\gamma_{i,k}}e^{\frac{2\beta_{i,k}^*}{\Theta_{i,k}^*}}\left(1 +\sum_{j=i}^{I}{\lambda_j}\right) + \sum_{j=i}^{I-1}{\mu_j} - \mu_I-\sigma_{i,k}=0,\\\label{der 4c}
\frac{\partial \mathcal{L}}{\partial \Theta_{i,k}} &\hspace{-0.15in} =& \hspace{-0.15in} \left(\frac{2\beta_{i,k}^* e^{\frac{2\beta_{i,k}^*}{\Theta_{i,k}^*}}}{\gamma_{i,k} \Theta_{i,k}^*}-\frac{e^\frac{2\beta_{i,k}^*}{\Theta_{i,k}^*}-1}{\gamma_{i,k}}-\epsilon \right) \left(1 +\sum_{j=i}^{I}{\lambda_j}\right)-\phi_{i,k} +\psi_{i,k}=0,
\end{eqnarray}
for $i=1,...,I$ and $k=1,...,K$. Here, $\mathcal{L}$ is the Lagrangian of (\ref{prob 4c}) with $\lambda_i\geq 0$, $\mu_i\geq 0$, $\phi_{i,k}\geq 0$, $\psi_{i,k}\geq 0$, and $\sigma_{i,k}\geq 0$ as Lagrange multipliers corresponding to constraints (\ref{prob 4c:2})-(\ref{prob 4c:5}), respectively. The complementary slackness conditions are given as:
\begin{eqnarray}\label{comp 2c:1}
\lambda_i \left(\sum_{j=1}^{i}{\sum_{k=1}^{K}{\frac{\Theta_{j,k}^*}{\gamma_{j,k}}\left(e^\frac{2\beta_{j,k}^*}{\Theta_{j,k}^*}-1\right)+\Theta_{j,k}^*\epsilon}}-\sum_{j=1}^{i}{E_{j}}\right)&
\hspace{-0.2in}=&\hspace{-0.15in}0, ~ \forall i \\\label{comp 2c:2}
\mu_i \left(\sum_{j=1}^{i}{\sum_{k=1}^{K}{\beta_{j,k}^*}}-\sum_{j=1}^{i}{B_{j-1}} \right)=0,~ i=1,...,I-1&&\hspace{-0.2in}\\\label{comp 2c:3}
\hspace{-3in}\mu_I \left(\sum_{j=1}^{I}{B_{j}}-\sum_{j=1}^{I}{\sum_{k=1}^{K}{\beta_{j,k}^*}} \right)=0 \hspace{1.05in}&&\hspace{-0.15in} \\\label{comp 2c:4}
\phi_{i,k} (\Theta_{i,k}^*-\tau_i)=0, ~ \psi_{i,k} \Theta_{i,k}^* =0, ~  \sigma_{i,k} \beta_{i,k}^* =0, ~\forall i, &\hspace{-0.1in} \forall k. &\hspace{-0.2in}
\end{eqnarray}

Similar to Section \ref{throughput maximization}, we characterize the properties of the optimal transmission policy for the energy maximization problem using the KKT conditions in (\ref{der 3c})-(\ref{comp 2c:4}).

We observe that the optimal power $p_{i,k}^*$ and transmission duration $\Theta_{i,k}^*$ for epoch $i$ of sub-channel $k$ for $i=1,...,I$ and $k=1,...,K$, satisfy the following:
\begin{itemize}
\item If $\Theta_{i,k}^*=0$, $p_{i,k}^*$ must be zero as no data is transmitted in that epoch.
\item If $0<\Theta_{i,k}^*\leq \tau_i$, then  $\psi_{i,k}=\sigma_{i,k}= 0$ due to the complementary slackness conditions in (\ref{comp 2c:4}). In this case, the optimal transmission power $p_{i,k}^*$ can be computed in terms of $\lambda_j$ and $\mu_j$, $j\geq i$, as follows
\begin{eqnarray}
\label{eq 9c}
p_{i,k}^*=\left[\frac{\mu_I-\sum_{j=i}^{I-1}{\mu_j}}{2(1+\sum_{j=i}^{I}{\lambda_j})}-\frac{1}{\gamma_{i,k}}\right]^+.
\end{eqnarray}
This is obtained by using (\ref{der 3c}) and replacing $\beta_{i,k}^*$ with $\Theta_{i,k}^*\log\left(1+\gamma_{i,k}p_{i,k}^*\right)$. In addition, we can obtain the following from (\ref{der 4c}):
\begin{eqnarray}\label{eq 8c}
\frac{2\beta_{i,k}^* e^{\frac{2\beta_{i,k}^*}{\Theta_{i,k}^*}}}{\gamma_{i,k} \Theta_{i,k}^*}-\frac{e^\frac{2\beta_{i,k}^*}{\Theta_{i,k}^*}-1}{\gamma_{i,k}}-\epsilon =\frac{\phi_{i,k}}{2(1 +\sum_{j=i}^{I}{\lambda_j})}.
\end{eqnarray}
When we replace $\beta_{i,k}^*$ with $\Theta_{i,k}^*\log\left(1+\gamma_{i,k}p_{i,k}^*\right)$, we get
\begin{eqnarray}\label{eq 7c}
\log\left(1+\gamma_{i,k}p_{i,k}^*\right)\left(\frac{1}{\gamma_{i,k}}+p_{i,k}^*\right)-(p_{i,k}^*+\epsilon) =\frac{\phi_{i,k}}{2(1 +\sum_{j=i}^{I}{\lambda_j})}
\end{eqnarray}

Note that when $0<\Theta_{i,k}^*< \tau_i$, i.e., $\phi_{i,k}=0$, we obtain (\ref{eq 3}) since $\left(1 +\sum_{j=i}^{I}{\lambda_j}\right)>0$. This suggests that the optimal transmission power $p_{i,k}^*$ is equal to $v_{i,k}^*$, the solution of (\ref{eq 2c}) when $\phi_{i,k}=0$. When $\Theta_{i,k}^*=\tau_i$, i.e., $\phi_{i,k} \geq 0$, it can be argued from (\ref{eq 7c}) that the optimal transmission power $p_{i,k}^*$ must satisfy $p_{i,k}^* \geq v_{i,k}^*$.
\end{itemize}

\begin{rem}
\label{remark 3 rev}
Similar to the throughput maximization problem in Section \ref{throughput maximization}, as argued in Remark \ref{remark 2 rev}, the optimal transmission policy over sub-channels becomes the classical water-filling solution when there is no processing cost, i.e., $\epsilon=0$. This follows from the fact that (\ref{eq 7c}) leads to $p_{i,k}^*=v_{i,k}^*=0$, when $\epsilon=0$ and $0< \Theta_{i,k}^* < \tau_i$, as argued in Remark \ref{remark 2 rev}.
\end{rem}

\begin{lemma}
\label{lemma 2c}
In the optimal transmission policy, whenever the glue level in sub-channel $k$, $k\in\{1,...,K\}$, increases from one epoch to the next, either the battery depletes and a new energy packet is harvested, or the data buffer empties and a new data packet arrives.
\end{lemma}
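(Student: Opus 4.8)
The plan is to follow the template of Lemma~\ref{lemma 1c}, but driven by the energy-maximization stationarity condition (\ref{eq 9c}) together with the complementary slackness conditions (\ref{comp 2c:1})-(\ref{comp 2c:2}) in place of their throughput counterparts. For any sub-channel $k$ that is active in epoch $i$, i.e.\ $0<\Theta_{i,k}^*\leq\tau_i$ and $p_{i,k}^*>0$, equation (\ref{eq 9c}) gives the glue level
\[
\frac{1}{\gamma_{i,k}}+p_{i,k}^*=\frac{\mu_I-\sum_{j=i}^{I-1}\mu_j}{2\left(1+\sum_{j=i}^{I}\lambda_j\right)},
\]
whose right-hand side is independent of $k$ (the analogue of Lemma~\ref{lemma 1d}); I abbreviate its numerator and denominator as $N_i$ and $D_i$, so that each active epoch has a single glue level $G_i=N_i/D_i$. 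Since $p_{i,k}^*>0$ forces $N_i/D_i>1/\gamma_{i,k}>0$ with $D_i>0$, both $N_i$ and $D_i$ are strictly positive on every active epoch.

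First I would compare two consecutive active epochs using the telescoping relations $N_i=N_{i+1}-\mu_i$ and $D_i=D_{i+1}+2\lambda_i$, obtained by peeling off the $j=i$ term from the two sums. Cross-multiplying, which is legitimate because $D_i,D_{i+1}>0$, the inequality $G_{i+1}>G_i$ collapses to
\[
2\lambda_i N_{i+1}+\mu_i D_{i+1}>0.
\]
Because $N_{i+1}>0$, $D_{i+1}>0$ and $\lambda_i,\mu_i\geq0$, this holds if and only if $\lambda_i>0$ or $\mu_i>0$. Hence a glue-level increase forces at least one of these multipliers to be strictly positive. Unlike the throughput setting of Lemma~\ref{lemma 1c}, where $\lambda_i$ and $\mu_i$ sit on mutually exclusive constraints, here they are attached to the independent energy- and data-causality constraints and may both be positive, which is precisely why the statement reads as an inclusive \emph{either/or}.

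I would then translate the multipliers back into physical events through complementary slackness. If $\lambda_i>0$, then (\ref{comp 2c:1}) makes the energy-causality constraint (\ref{prob 4c:2}) tight at epoch $i$, so all harvested energy has been spent by the end of epoch $i$ and the battery depletes; if $\mu_i>0$, then (\ref{comp 2c:2}) makes the data-causality constraint (\ref{prob 4c:3}) tight at epoch $i$, so all arrived data has been sent and the data buffer empties. The step I expect to be the main obstacle is upgrading these two statements to the full claim, namely that a fresh energy (resp.\ data) packet must arrive at $t_{i+1}$. The key observation is that the glue-level comparison only makes sense when epoch $i+1$ is itself active, and once the battery has depleted the energy available at the start of epoch $i+1$ is exactly $\sum_{j=1}^{i+1}E_j-\sum_{j=1}^{i}E_j=E_{i+1}$; thus $E_{i+1}=0$ would force $p_{i+1,k}^*=0$ and contradict activity, so $E_{i+1}>0$. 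The identical argument for the data buffer shows that an active epoch $i+1$ with $\mu_i>0$ requires a new data arrival. Assembling these pieces gives the lemma; the only care needed is to keep the active-epoch hypothesis explicit throughout, so that $N_i,D_i>0$ and the KKT/perspective formula (\ref{eq 9c}) remain valid at both epochs being compared.
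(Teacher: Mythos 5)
Your proposal is correct and follows essentially the same route as the paper's proof: both read the glue level off the stationarity condition (\ref{eq 9c}) and use the complementary slackness conditions (\ref{comp 2c:1})--(\ref{comp 2c:2}) to tie a glue-level increase to $\lambda_i>0$ (battery depletion) or $\mu_i>0$ (data buffer emptying). Your write-up is in fact more complete than the paper's terse argument, since you make explicit both the cross-multiplication step showing an increase forces $\lambda_i>0$ or $\mu_i>0$, and the observation that activity of epoch $i+1$ forces a fresh energy/data arrival, which the paper asserts without proof.
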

\begin{proof}
The optimal transmission power $p_{i,k}^*$ satisfies (\ref{eq 9c}) when
there is non-zero data transmission in epoch $i$ of sub-channel $k$ for $i=1,...,I$ and $k=1,...,K$. We can also conclude from the complementary slackness conditions in (\ref{comp 2c:1})-(\ref{comp 2c:2}) that whenever $\lambda_i>0$, the battery depletes, and whenever $\mu_i>0$, the data buffer empties. Therefore, the glue level increases from one epoch to the next, when either the battery depletes and a new energy packet is harvested, or the data buffer empties and a new data packet arrives.
\end{proof}

Similar to Lemma \ref{lemma 1d}, in the optimal transmission policy, the glue levels in an epoch are the same for all the sub-channels $k\in\{k:\beta_{i,k}^*>0\}$.

Note as in Section \ref{throughput maximization}, the optimal transmission policy must satisfy $p_{i,k}^*\geq v_{i,k}^*$. Therefore, Remark \ref{remark 1} is valid for the energy maximization problem as well.

\begin{rem}
\label{remark 3}
Similar to the throughput maximization problem in (\ref{prob 2c}), the energy maximization problem in (\ref{prob 4c}) may have multiple solutions. The optimal transmission power $p_{i,k}^*$ is equal to $v_{i,k}^*$ if the optimal transmission duration of an epoch of a sub-channel satisfies $0<\Theta_{i,k}^*<\tau_i$. If multiple epochs have the same channel gain, the optimal values satisfying $\frac{\beta_{i,k}^*}{\Theta_{i,k}^*}=\frac{1}{2}\log(1+\gamma_{i,k} p_{i,k}^*)$ are the same. Therefore, as can be argued from the objective function of (\ref{prob 4c}), we can find another optimal transmission policy satisfying the energy and data causality constraints by transmitting some of the data in a different epoch with the same optimal transmission power.
\end{rem}

\subsection{Directional Backward Glue-Pouring Algorithm with Data Arrivals}
\label{alg 2}
The directional glue-pouring algorithm of Section \ref{glue algorithm} can be modified to solve the energy maximization problem by taking into account data arrivals. A data packet can only be transmitted after it has arrived due to the data causality constraint. When part of the data $B_i$ is transferred to future epochs $j>i$, the constraint (\ref{prob 4c:3}) is satisfied with inequality, i.e., $\mu_i=0$. Then the glue level remains the same as argued in Lemma \ref{lemma 2c}. Conversely, if there is a glue level increase, i.e., if $\mu_i>0$, then the constraint (\ref{prob 4c:3}) is satisfied with equality, and no data is transferred to future epochs. By Lemma \ref{lemma 1c} the optimal transmission policy must satisfy the condition $p_{i,k}^*\geq v_{i,k}^*$, and the glue levels are the same for all sub-channels in an epoch. Accordingly, we can schedule transmission of the data starting from the last non-zero data packet to the first, such that the required energy to transmit the data satisfies the energy causality constraint. Therefore, the optimal transmission policy can be computed using a directional backward glue-pouring algorithm with data arrivals in which the data packet $B_i$ is transmitted over subsequent epochs, and the energy allocation for each data packet is done using the glue-pouring algorithm in Section \ref{glue algorithm}. Accordingly, the optimal transmission policy can be computed as in Table \ref{table:2}.
\begin{table}[ht]
\caption{Directional backward glue-pouring algorithm with data arrivals}
\begin{enumerate}
\item Initialization: Set glue level for epoch $j$, $\xi_j=0$, $j=1,...,I$. Also set $i=I$.
\item Allocate energy to the subchannels of epoch $i$ using the glue pouring algorithm such that $B_i$ amount of data is delivered in that epoch. Compute the glue level $\xi_i =\frac{1}{\gamma_{i,k}}+p_{i,k}^* $ while satisfying the condition $p_{i,k}^*\geq v_{i,k}^*$ for each subchannel as argued in Lemma \ref{lemma 1d}.
\item Set $m=i$. If $m=I$, go to step 6.
\item If the glue level of epoch $m$ is greater than the subsequent epoch $m+1$, i.e., $\xi_{m} > \xi_{m+1}$, reallocate power to the subchannels of epochs $i,...,m+1$ while satisfying the glue pouring solution within each epoch, such that the allocated energy to epochs $j$, $j=i,...,n$, $n\leq m+1$, is less than and equal to $\sum_{j=1}^{n}{E_j}$. Note that when the allocated energy to epochs $j$, $j=i,...,n$ is less than $\sum_{j=1}^{n}{E_j}$, the glue level of each epoch is constant.
\item If $m=I$, go to step 6. Otherwise, increase $m$ by one, and go to 4.
\item If $i=1$, stop. Otherwise, decrease $i$ by one and go to step 2.
\end{enumerate}
\centering
\label{table:2}
\end{table}

\begin{figure}[t]
\centering
\subfigure[]{
\includegraphics[scale=0.5,trim= 15 0 0 0]{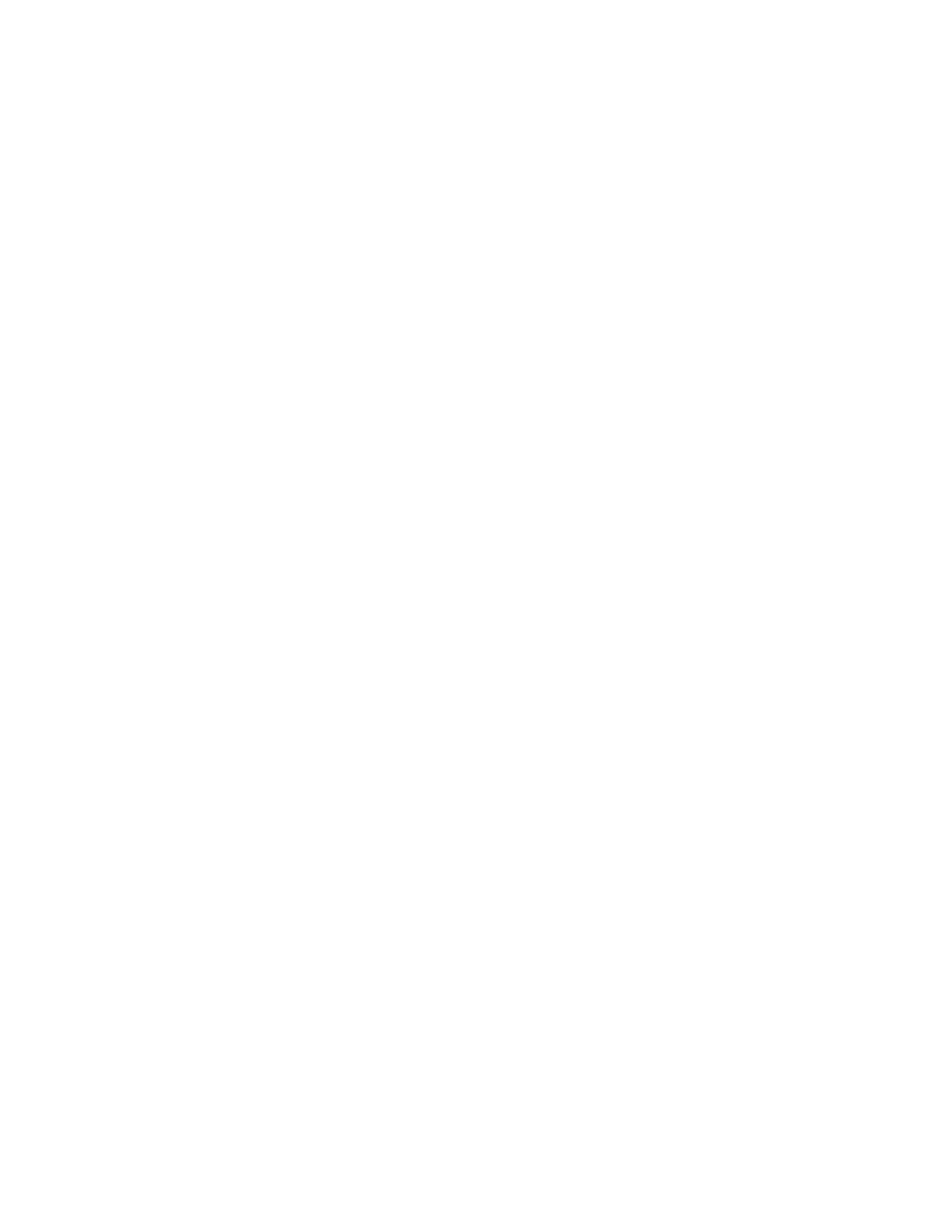}
\label{fig 2:subfig1}
}
\subfigure[]{
\includegraphics[scale=0.5,trim= 15 0 0 0]{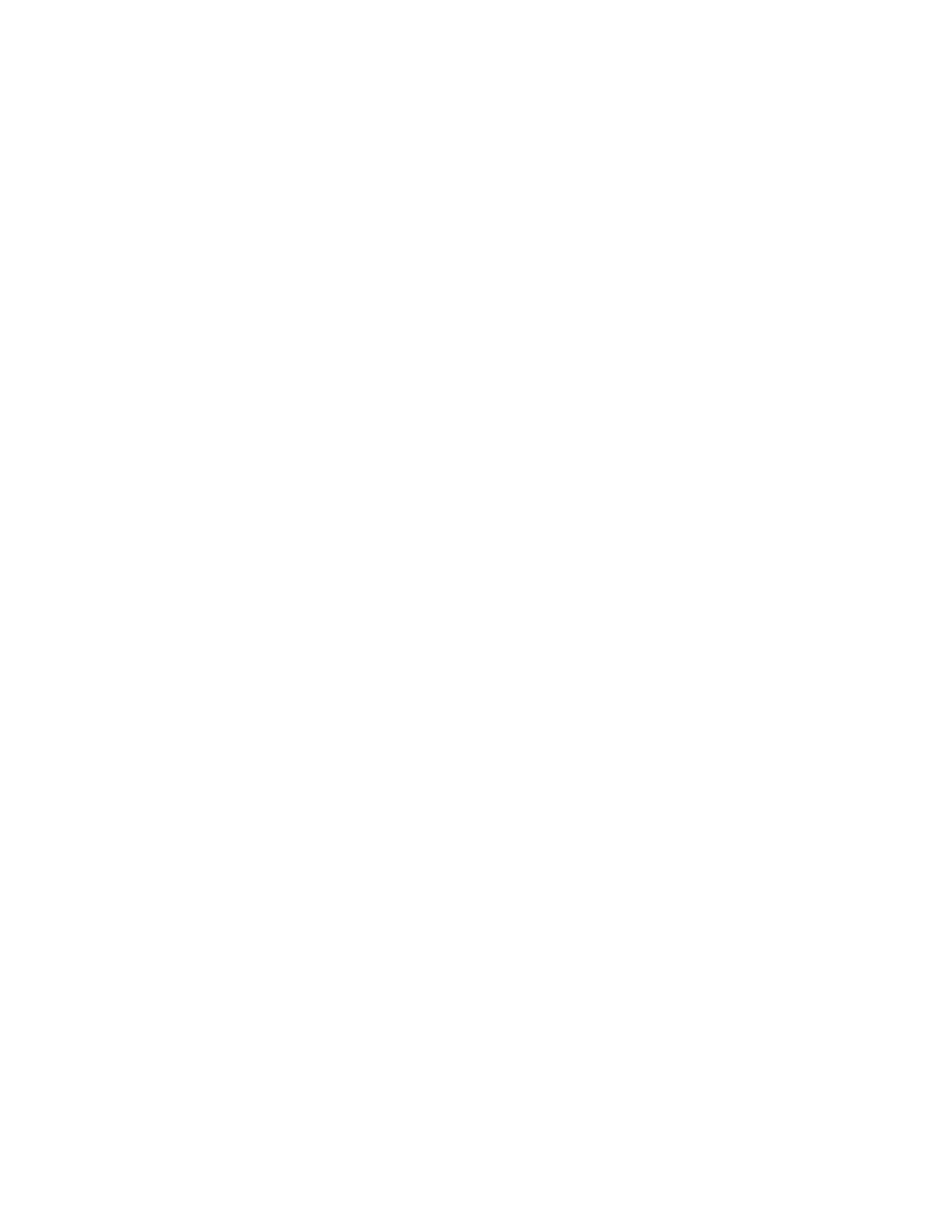}
\label{fig 2:subfig2}
}
\subfigure[]{
\includegraphics[scale=0.5,trim= 15 0 0 0]{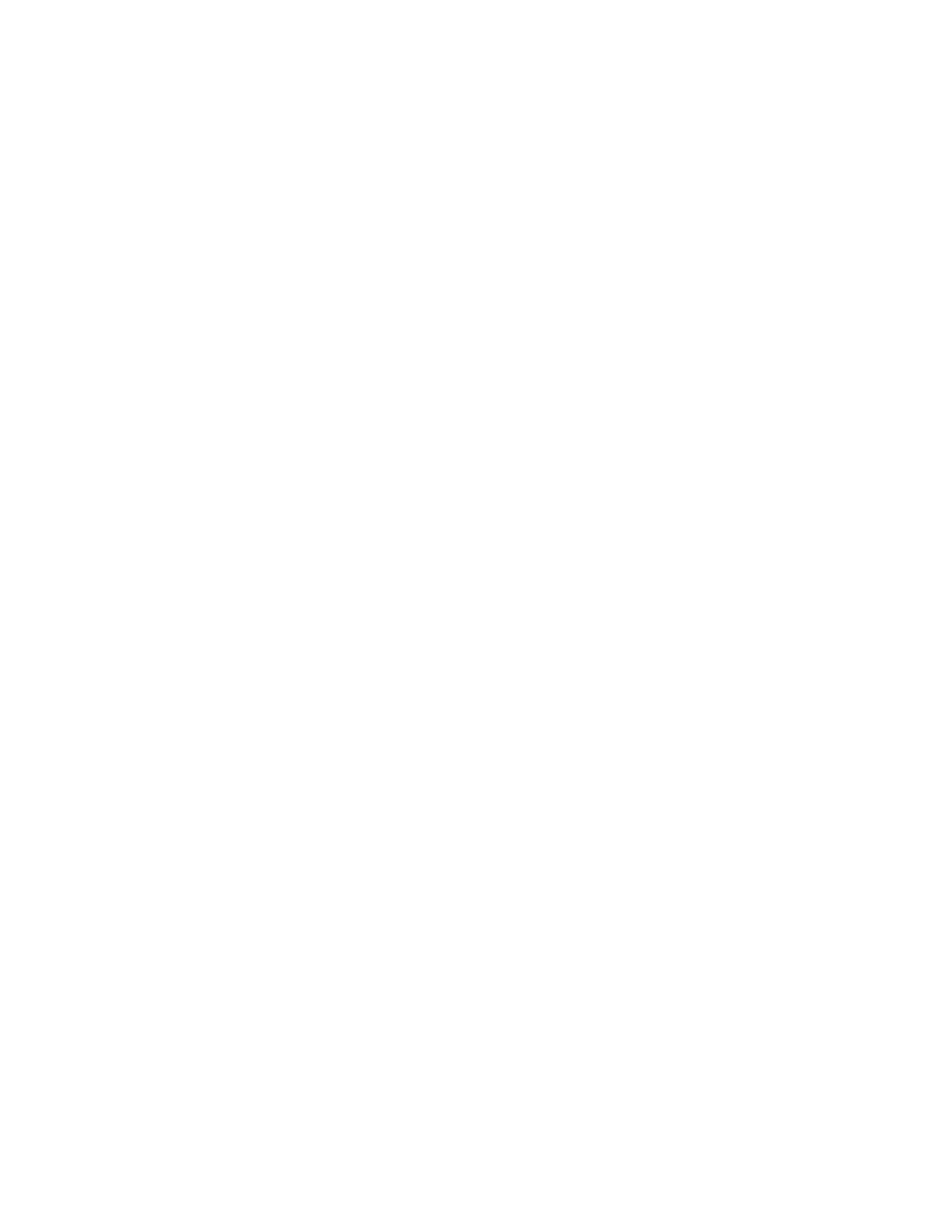}
\label{fig 2:subfig3}
}
\caption{Directional backward glue-pouring algorithm with data arrivals.}
\label{fig 2}%
\end{figure}

To illustrate the directional backward glue-pouring algorithm with data arrivals, we consider the algorithm for two sub-channels with two fading states in each sub-channel as shown in Fig. \ref{fig 2}. The inverse channel gains are indicated by solid blocks in the figure. The optimal power levels $v_{i,k}^*$ are indicated with dashed lines which are $v_{i,k}^*$ above the solid blocks. In addition, the energy and data arrivals are showed with downward arrows, respectively. The algorithm first allocates power to the second epoch such that $B_2$ bits are transmitted in this epoch and the glue levels are the same in the sub-channels in which the condition $p_{2,k}^* \geq v_{2,k}^*$, $k=1,2$, is satisfied (see Fig. \ref{fig 2:subfig2}). Note that, although both energies $E_1$ and $E_2$ are available for the transmission of $B_2$ bits, $E_2$ is used first, as $E_1$ can also be used to transmit the bits in the first data packet. If $E_2$ was not enough to transmit $B_2$ bits, some of the energy from the first arrival $E_1$ would also be used. Then, the algorithm considers the first data packet $B_1$ and allocates power according to the glue-pouring algorithm in Section \ref{glue algorithm} as shown in Fig. \ref{fig 2:subfig3}.

\section{Transmission Completion Time (TCT) Minimization}
\label{time minimization}
In this section, we consider the TCT minimization problem introduced in Section \ref{system}. Our goal is to identify an optimal transmission policy which minimizes the delivery time of all the data packets $B_i$, $i=1,...,I$. We again assume that the battery has infinite size $E_{max} \rightarrow \infty$. We first discuss the relation between the TCT minimization and energy maximization problems, and then we propose an algorithm which finds the optimal transmission policy for TCT minimization.

As argued in Remark \ref{remark 3}, the optimal transmission scheme for the energy maximization problem may have multiple solutions which can lead to different TCTs. Since we are seeking the minimum TCT, without loss of optimality, we put some restrictions on the optimal transmission policy obtained by the energy maximization problem before we relate the two problems: i) non-zero power is always allocated at the beginning of an epoch, i.e., during the time interval $\left[t_{i-1},t_{i-1}+\Theta_{i,k}^*\right)$; ii) if there are multiple epochs with the same channel gain in a sub-channel $k$, $k=1,...,K$, the transmission power is allocated starting from the earliest epoch satisfying the energy and data causality constraints; and iii) if all the utilized sub-channels in the last epoch have the same channel gain, then the transmission power is allocated to those sub-channels for which the transmission duration $\Theta_{i,k}^*$ is the same.

Denoting the minimum TCT time as $T_{min}$ we note that the battery must be depleted by the time $T_{min}$, otherwise we could increase the transmission power and deliver the arrived data in a shorter time. Therefore, we can conclude that the remaining energy in the battery obtained by the energy maximization problem must be zero when the deadline $T$ is equal to $T_{min}$. Any delay constraint $T$, for which the energy maximization problem leads to zero remaining energy in the battery, satisfies $T \geq T_{min}$, as the transmission power in the time interval $[T_{min},T)$ can be zero.

Following the arguments above, the smallest transmission deadline $T=t_m$, for which the energy maximization problem has a feasible solution, is an upper bound on $T_{min}$. This suggests that for $T=t_{m-1}$, the harvested energy is insufficient to transmit all the arrived data packets, and $t_{m-1}$ is a lower bound on $T_{min}$.
Note that due to the requirement of transmitting all the arriving data packets, we also need to ensure that the last non-zero data packet arrival instant $t_n^b$ is upper bounded by $t_m$. After identifying the time interval $(t_{m-1},t_m]$, we formulate a convex optimization problem which minimizes the maximum of the transmission durations of sub-channels in epoch $m$, i.e., $\max\{\Theta_{m,k} : \forall k, T_{min} \in (t_{m-1},t_m]\}$ to find $T_{min}$. The TCT minimization algorithm is outlined next.

\subsection{TCT minimization}
 In order to compute $T_{min}$, we first find the smallest $m\in\{1,...,I\}$, such that $t_m$ is greater than the last nonzero data packet arrival time $t_n^b$, and the directional backward glue-pouring algorithm in Section \ref{alg 2} with $T=t_m$ has a feasible solution.

 We next solve the following minimization problem:
\begin{subequations}
\label{prob tt}
\begin{eqnarray}\label{prob tt:1}
\underset{\beta_{i,k},\Theta_{i,k}}{\operatorname{min}}\hspace{-0.25in}&& t \\\label{prob tt:2}
 \text{s.t.}\hspace{-0.1in} && \sum_{j=1}^{i}{\sum_{k=1}^{K}{\frac{\Theta_{j,k}}{\gamma_{j,k}}\left(e^\frac{2\beta_{j,k}}{\Theta_{j,k}}-1\right)+\Theta_{j,k}\epsilon}}-\sum_{j=1}^{i}{E_{j}} \leq 0, ~~ i=1,...,m,\\\label{prob tt:3}
&&\sum_{j=1}^{i}{\sum_{k=1}^{K}{\beta_{j,k}}}-\sum_{j=1}^{i}{B_{j}} \leq 0, \quad i=1,...,m-1, \\\label{prob tt:4}
&&\sum_{j=1}^{m}{B_{j}}-\sum_{j=1}^{m}{\sum_{k=1}^{K}{\beta_{j,k}}} \leq 0, \\\label{prob tt:5}
&& 0\leq \Theta_{i,k} \leq \tau_i, ~ i=1,...,m-1, ~ k=1,...,K, \\\label{prob tt:6}
&& 0\leq \Theta_{m,k} \leq t, \quad k=1,...,K, \\\label{prob tt:7}
&& 0\leq \beta_{i,k}, \quad i=1,...,m, ~~~ k=1,...,K,
\end{eqnarray}
\end{subequations}
where $\beta_{i,k}\triangleq \frac{\Theta_{i,k}}{2}\log(1+\gamma_{i,k}p_{i,k})$ for $i=1,...,N$ and $k=1,...,K$, and $t$ is the epigraph of $\max\{\Theta_{m,k} : k=1,...,K \}$ as stated in (\ref{prob tt:6}). Here, $\beta_{i,k}$ can be considered as the total amount of data transmitted within epoch $i$ of sub-channel $k$.  In the above optimization problem (\ref{prob tt:2}) and (\ref{prob tt:3}) are due to the energy and data causality constraints in (\ref{const 1}) and (\ref{const 3}), respectively.
The minimum TCT is $T_{min}=t_{m-1}+t^*$, where $t^*$ is the solution of the above optimization problem.
Once $T_{min}$ is found, the corresponding optimal transmission policy can be obtained by solving the energy maximization problem in Section \ref{energy maximization} with deadline $T=T_{min}$.

\section{Online Transmission Policies}\label{online policies}
In this section, we consider causal knowledge of the energy and data arrival profiles and channel gains at the transmitter. In such a scenario, the optimal online transmission policy can be obtained by first discretizing the state space and applying dynamic programming \cite{dynamic}. However, due to the high computational complexity of dynamic programming algorithms, here we focus on less complex heuristic online algorithms for the throughput and energy maximization problems using properties of the offline optimal transmission policies developed in Sections \ref{throughput maximization} and \ref{energy maximization}. Numerical comparisons with the optimal offline policies and dynamic programming solutions will be provided in Section \ref{numerical result}.

\subsection{Throughput Maximization}
\label{throughput online}
The proposed online throughput maximizing transmission policy is of myopic nature. The algorithm allocates transmission power to sub-channels based on the available energy in the battery and channel gains of sub-channels whenever an event (a variation in the channel gain or an energy arrival) occurs. Since consuming all the harvested energy by the deadline is optimal, transmission powers over the sub-channels are computed such that the battery is depleted by the deadline as if there will be no more energy arrivals or channel state variations. Therefore, available energy at the battery, which is bounded by $E_{max}$, is allocated to sub-channels using the directional backward glue pouring algorithm as introduced in Section \ref{glue algorithm}. As argued in Lemma \ref{lemma 1d}, the optimal glue level must be the same for all sub-channels to which non-zero transmission energy is allocated. The transmitter continues its transmission using the optimal transmission powers resulting from the above computation until either the battery depletes, or a new event occurs.

\subsection{Energy Maximization}
\label{energy online}
The online energy maximization problem is also a myopic one. Since the transmitter does not know the future energy/data packet arrivals or the channel gains, the online policy evaluates the transmission power $p_{i,k}$ for each sub-channel at time $t_{i}$ based on the available data in the data buffer and the channel gains of the sub-channels. The energy maximization problem requires transmitting all the data packets by the deadline $T$. Therefore, the transmission powers $p_{i,k}$, $k=1,...,K$, have to be chosen to guarantee the transmission of all stored data at time $t_{i}$ until the deadline $T$ as if there are no energy/data arrivals or channel gain changes after $t_{i}$. Accordingly, the optimal transmission powers $p_{i,k}$ can be computed using the directional backward glue pouring algorithm as introduced in Section \ref{alg 2}. Then the transmission powers are set to $p_{i,k}$ and the transmission durations to $\Theta_{i,k}$ over the respective sub-channels until either a new event occurs, or the battery depletes due to insufficient energy to transmit all the data. As argued in Section \ref{energy maximization}, the optimal glue level is the same for all the utilized sub-channels while the optimal transmission power satisfies the condition $p_{i,k}^* \geq v_{i,k}^*$.

\section{Numerical Results}\label{numerical result}
In this section, we provide numerical results to illustrate the optimization problems considered. We first study the offline throughput maximization problem. We consider four parallel sub-channels with three epochs with durations $\mathbf{\tau}=[3.5,4,2.5]$ s. We consider an energy arrival profile $\mathbf{E}=[9,8,5]$ microjoules ($\mu$J) at the beginning of each epoch. We set $E_{max}=10$ $\mu$J. Channel gains of epochs are $\gamma_{.,1}=(0.8,0.55,0.45)\times 10^6$, $\gamma_{.,2}=(0.35,0.9,0.6)\times 10^6$, $\gamma_{.,3}=(0.6,0.4,0.5)\times 10^6$, and $\gamma_{.,4}=(0.55,0.35,0.4)\times 10^6$ for sub-channels $k=1,2,3,4$, respectively. The optimal transmission policy for the throughput maximization problem for the above energy and channel profile for different values of the processing cost $\epsilon$ is shown in Fig. \ref{fig 7}. In Fig. \ref{fig 7}, all the sub-channels in an epoch are shown as a sequence of blocks which are labeled with the corresponding sub-channel index. The solid blocks represent the inverse channel gains, the dashed horizontal lines correspond to $\frac{1}{\gamma_{i,k}}+v_{i,k}^*$, where $v_{i,k}^*$ is the solution of (\ref{eq 2c}) when $\phi_{i,k}=0$ and the shaded blocks show the optimal power levels. The optimal transmission policy with no processing cost, i.e., $\epsilon=0$, is shown in Fig. \ref{fig 7:subfig1}. As can be seen from the figure, since there is no cost of increasing the transmission duration, the optimal transmission policy across sub-channels is classical water-filling (Remark \ref{remark 2 rev}). The difference in power levels among epochs is due to the energy causality constraint as argued in Section \ref{throughput maximization}. For the same energy arrival and channel profile, taking into account a processing cost of $\epsilon=0.25$ $\mu$W per sub-channel, we obtain the transmission policy in Fig. \ref{fig 7:subfig3}. The processing cost results in the total transmitted data falling from $6.23$ nats to $5.21$ nats. As shown in the figure, the optimal transmission policy becomes bursty while having the same glue level within an epoch. In the figure, the decrease in the optimal glue level from the first epoch to the second is due to the finite battery size, and the increase in the optimal glue level from the second epoch to the third is due to the energy causality constraint.
\begin{figure}[ht]
\centering
\subfigure[]{
\includegraphics[scale=0.6,trim= 45 0 20 0]{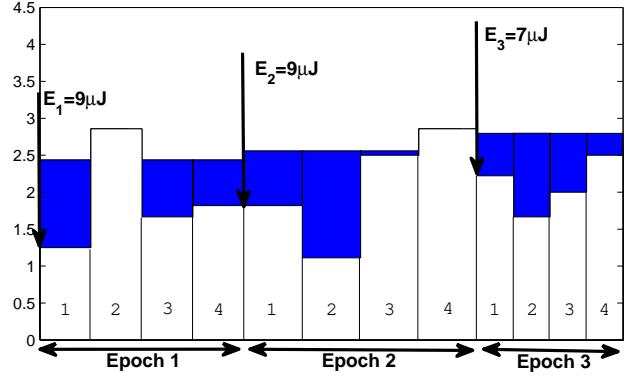}
\label{fig 7:subfig1}
}
\subfigure[]{
\includegraphics[scale=0.6,trim= 58 0 35 0]{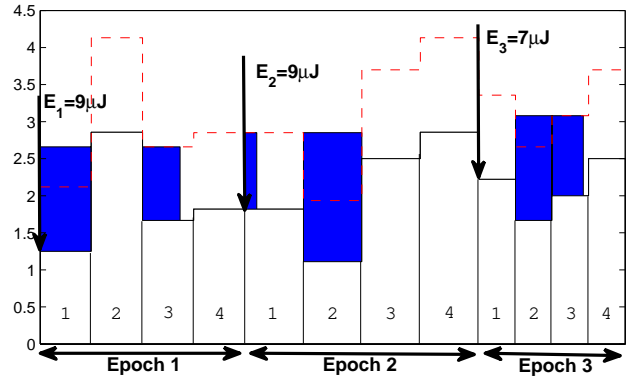}
\label{fig 7:subfig3}
}
\caption{Throughput maximization: (a) Optimal power levels for $\mathbf{\epsilon}=0$, shown as the heights of the shaded blocks, are $\left([1.18, 0.74, 0.57],\right.$ $\left.[0, 1.44, 1.13],[0.76, 0.05, 0.79],[0.61, 0, 0.29]\right)$ $\mu$W with durations $\left([3.5,4,2.5],[0,4,2.5],[3.5,4,2.5],[3.5,0,2.5]\right)$ s for sub-channels $k=1,...,4$, respectively. Total transmitted data is $B=6.23$ nats. (b) Optimal power levels for $\mathbf{\epsilon}=0.25$ $\mu$W, shown as the heights of the shaded blocks, are $\left([1.4, 1.03, 0],[0, 1.74, 1.41],[1, 0, 1.08],[0, 0, 0]\right)$ $\mu$W with durations $\left([3.5,0.8,0],[0,4,2.56],[2.04,0,2.13],[0,0,0]\right)$ s for sub-channels $k=1,...,4$, respectively. Total transmitted data is $B=5.21$ nats.}
\label{fig 7}%
\end{figure}

In Fig. \ref{fig 8} we illustrate the variation of the throughput with respect to $\epsilon$ for the same energy and channel profile given above. In addition, we illustrate the total transmission duration, which is the sum of the maximum of the transmission durations of all the sub-channels in an epoch, with respect to the processing energy cost in Fig. \ref{fig 81}. As it can be seen in the figure, as the processing energy cost increases, the transmission becomes more bursty.
\begin{figure}[ht]
\centering
\includegraphics[scale=0.7,trim= 25 0 0 0]{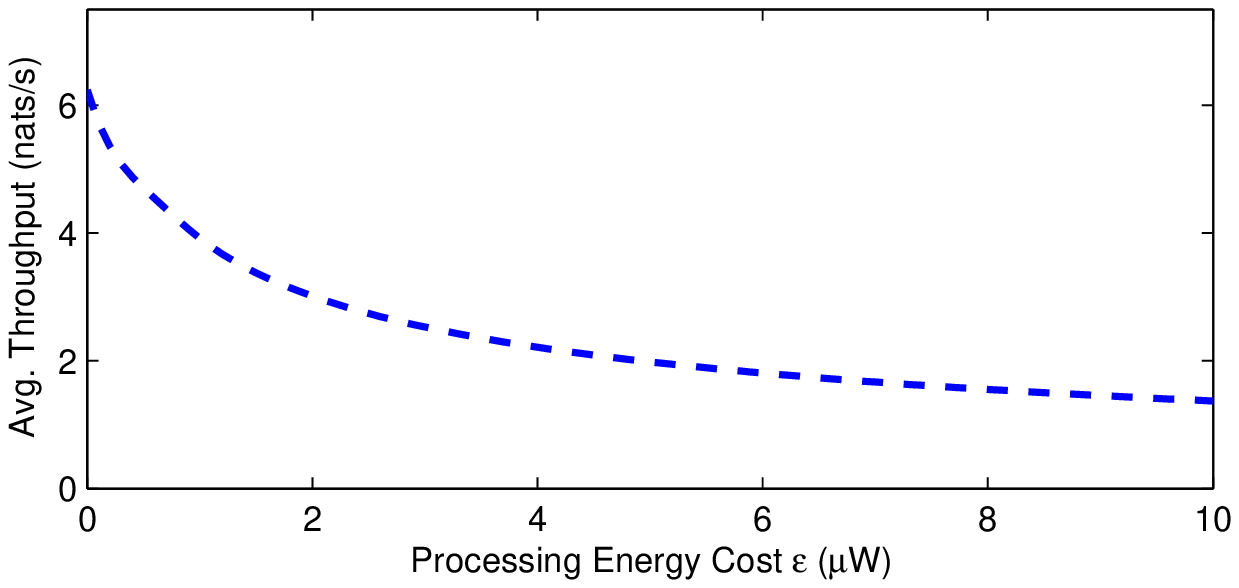}
\caption{Average throughput versus processing energy cost.}
\label{fig 8}%
\end{figure}
\begin{figure}[ht]
\centering
\includegraphics[scale=0.7,trim= 25 0 0 0]{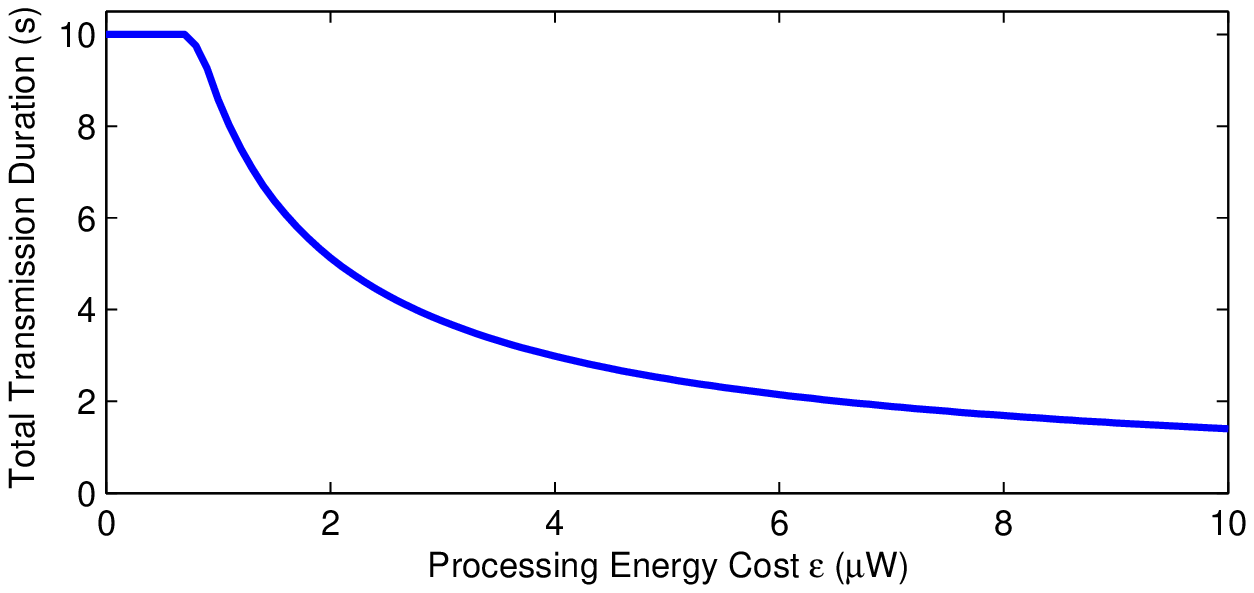}
\caption{Total transmission duration versus processing energy cost.}
\label{fig 81}%
\end{figure}

We next illustrate the optimal offline transmission policy for the energy maximization problem for different processing energy costs. We use the same energy arrival and channel gain profile given above and a data profile $\mathbf{B}=[0.5,2,1.5]$ nats. First, we set $\epsilon=0$ for each sub-channel, and obtain the transmission policy shown in Fig. \ref{fig 6:subfig1}. As shown in the figure, the optimal transmission policy utilizes epochs fully as there is no cost in increasing the transmission duration. In this case, the optimal transmission policy across sub-channels is classical water-filling (Remark \ref{remark 3 rev}), and the water level increases monotonically within a sub-channel due to the energy and data causality constraints. The remaining energy in the battery is $6.5$ $\mu$J. Then, we set $\epsilon=0.25$ $\mu$W, and obtain the optimal transmission policy in Fig. \ref{fig 6:subfig3}. As shown in the figure, the optimal transmission policy is now bursty. Consistent with the observations in Section \ref{energy maximization}, the glue levels are the same within an epoch, and increase monotonically within a sub-channel due to energy and data causality constraints. The remaining energy in the battery is $2.54$ $\mu$J.
\begin{figure}[ht]
\centering
\subfigure[]{
\includegraphics[scale=0.6,trim= 45 0 20 0]{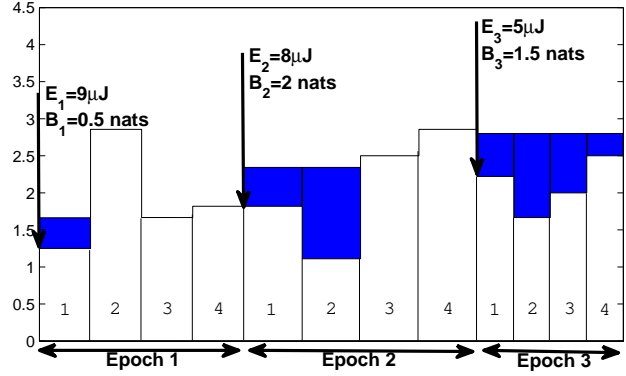}
\label{fig 6:subfig1}
}
\subfigure[]{
\includegraphics[scale=0.6,trim= 58 0 35 0]{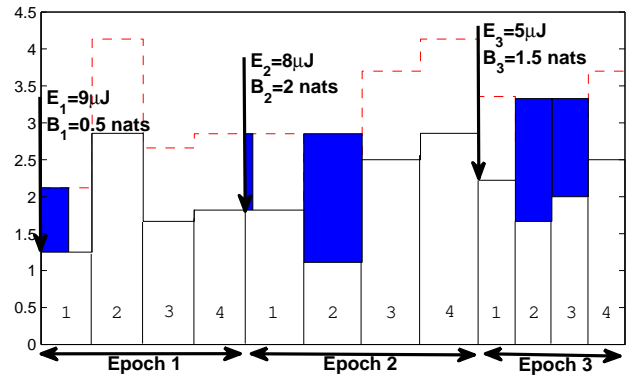}
\label{fig 6:subfig3}
}
\caption{Energy maximization: (a) Optimal power levels for $\epsilon=0$, shown as the heights of the shaded blocks, are $\left([0.41, 0.52, 0.57],[0, 1.23, 1.13],[0, 0, 0.8],[0, 0, 0.3]\right)$ $\mu$W with durations $\left([3.5,4,2.5],[0,4,2.5],[2.05,0,2.5],[0,0,2.5]\right)$ s for sub-channels $k=1,...,4$, respectively. The remaining energy in the battery is $6.5$ $\mu$J. (b) Optimal power levels for $\epsilon=0.25$ $\mu$W, shown as the heights of the shaded blocks, are $\left([0.87, 1.03, 0],[0, 1.74, 1.66],[0, 0, 1.32],[0, 0, 0]\right)$ $\mu$W with durations $\left([1.87,0.51,0],[0,4,2.5],[0,0,2.5],[0,0,0]\right)$ s. The remaining energy in the battery is $2.54$ $\mu$J.}
\label{fig 6}%
\end{figure}

The variation of the remaining energy in the battery with respect to $\epsilon$ for the above energy/data arrival and channel gain profile is shown in Fig. \ref{fig 9}. We observe that the maximum energy that can be saved in the battery at the end of the deadline decreases rapidly as the processing cost increases. For a processing cost of $\epsilon=0.49$ $\mu$W, the arriving energy is exactly the amount that is needed to transmit the arriving data. Transmission of all the data by the deadline is not possible for $\epsilon>0.49$ $\mu$W.
\begin{figure}[ht]
\centering
\includegraphics[scale=0.78,trim= 25 0 0 0]{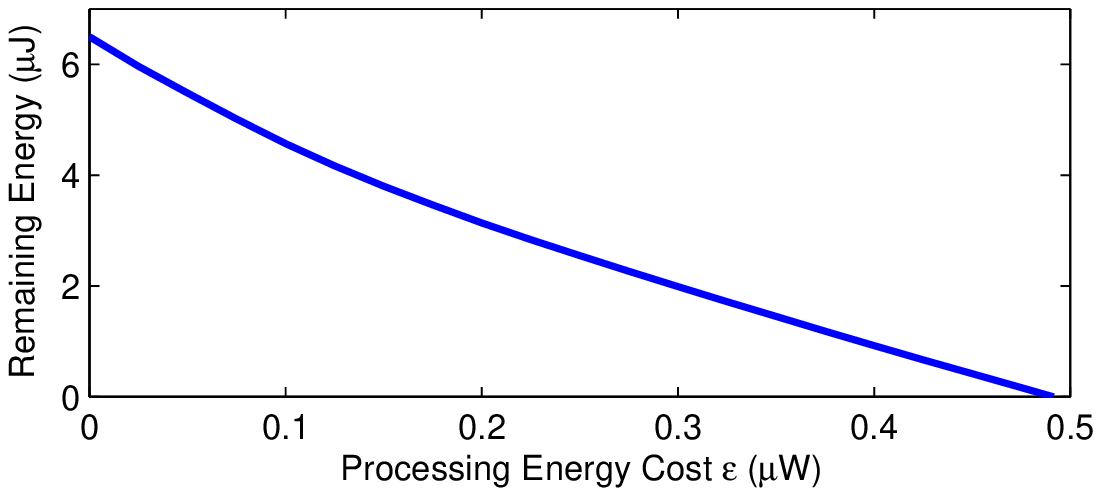}
\caption{Remaining energy in the battery versus processing energy cost.}
\label{fig 9}%
\end{figure}

We also consider the offline TCT minimization problem for the above energy/data arrival and channel gain profile with processing cost $\epsilon=0.25$ $\mu$W. The corresponding optimal transmission policy is given in Fig. \ref{fig 10}. The  corresponding minimum TCT is $8.26$ s.
\begin{figure}[ht]
\centering
\includegraphics[scale=0.6,trim= 25 0 0 0]{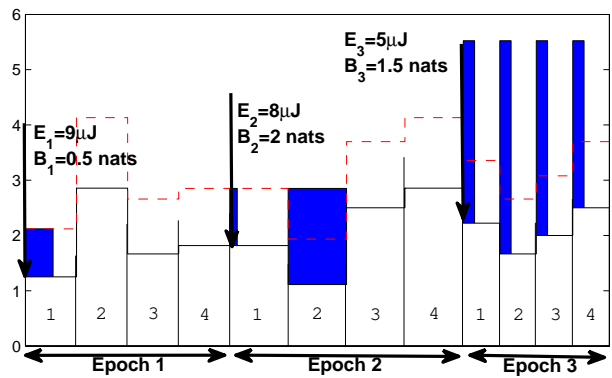}
\caption{TCT minimization: Optimal power levels for $\epsilon=0.25$, shown as the heights of the shaded blocks, are $\left([0.87, 1.03, 3.02],[0, 1.74, 3.86],[0, 0, 3.52],[0,0,3.02]\right)$ $\mu$W with durations $\left([1.89,0.51,0.76],[0,4,0.76],[0,0,0.76],[0,0,0.76]\right)$ s for sub-channels $k=1,...,4$, respectively. The minimum transmission completion time is $8.26$ s.}
\label{fig 10}%
\end{figure}

Finally, we evaluate the average performance of the online algorithms of Section \ref{online policies} by comparing them with the corresponding optimal offline policies. We consider two sub-channels. Each sub-channel has a fixed channel gain for $1$ s, which is independent across sub-channels and fading blocks, drawn from an exponential distribution with parameter $\lambda=1$. We set the transmission deadline to $T=10$ s. Therefore, there are ten fading levels for each sub-channel. We also assume that energy/data packets arrive only when the channel gains change. We first illustrate the performance of the throughput maximization problem. We set the battery size to $E_{max}=10$ $\mu$J, and the processing cost of the sub-channels to $\epsilon=1$ $\mu$W, respectively. We assume that energy packets have sizes chosen from a uniform distribution in the interval $[0, E]$ $\mu$J, where $E \in (0,10]$ $\mu$J. In order to see the degradation in the performance of the proposed online algorithm, we also provide a dynamic programming based solution \cite{dynamic}. The dynamic programming solution requires the quantization of battery state, energy amounts and fading states, and it achieves optimal performance asymptotically as the quantization resolution becomes finer. In our simulation we quantize the amount of energy in the battery uniformly with step size $1$ $\mu$J. We also quantize the fading states into eight levels such that levels are uniformly distributed. We illustrate the average throughput as a function of the average energy arrival rate $\frac{E}{2}$ in Fig. \ref{fig 11}. As shown in the figure, the online algorithm performs close to the offline transmission policy despite the lack of information about the future events. It also performs close to the dynamic programming solution. The performance loss of the online algorithm at high energy rates is partly due to the increased probability of battery overflows.

\begin{figure}[t]
\centering
\subfigure[]{
\includegraphics[scale=0.75,trim= 18 0 0 0]{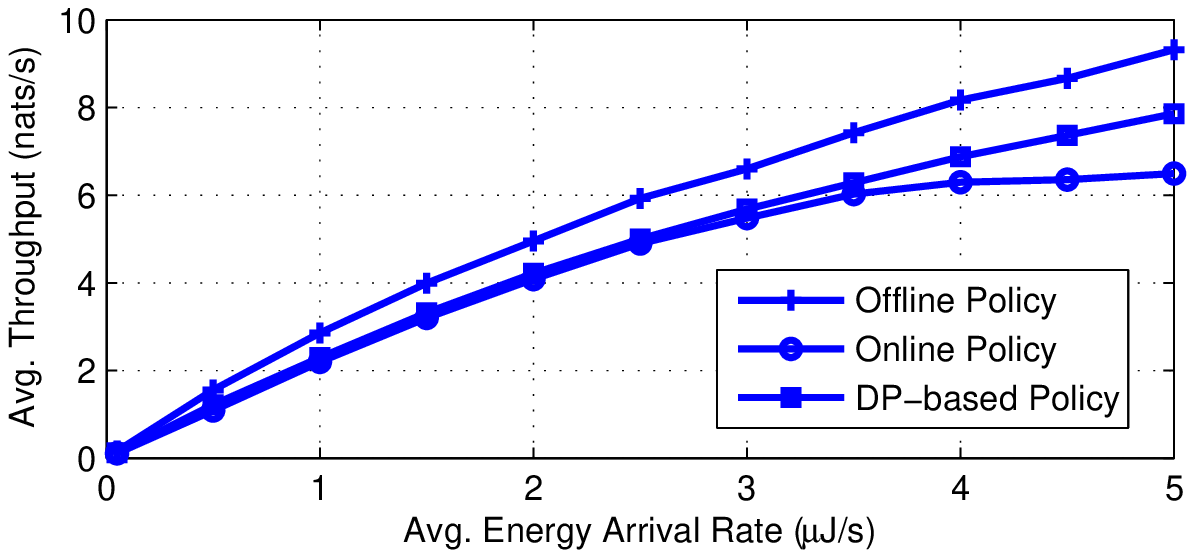}
\label{fig 11}%
}
\subfigure[]{
\includegraphics[scale=0.75,trim= 48 0 35 0]{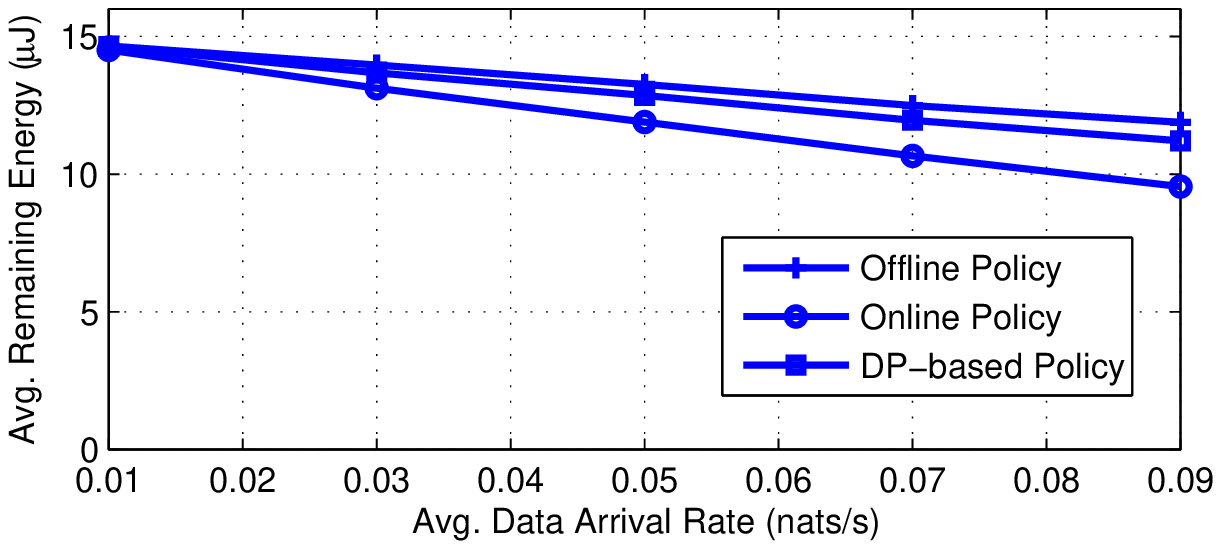}
\label{fig 13}%
}
\caption{(a) Average performances of online and offline throughput maximization policies as a function of the energy arrival rates. Dynamic programming (DP) solution is also included.  (b) Average performances of online and offline energy maximization policies as a function of the data arrival rates. Dynamic programming (DP) solution is also included.}
\end{figure}

Next, we evaluate the performance of the online energy maximization policy. We assume that the sizes of the energy and data packets are chosen from uniform distribution over the intervals $[0,3]$ $\mu$J and $[0,B]$ nats, where $B \in (0,0.18]$ nats, respectively. Similarly to the throughput maximization problem, to see the degradation in the performance of the proposed online algorithm, we also provide a dynamic programming based solution. We quantize the energy levels and fading states as in the throughput maximization problem. We also quantize the amount of data in the data buffer uniformly with step size $0.01$ nats. We demonstrate the average remaining energy in the battery as a function of the average data arrival rate $\frac{B}{2}$ in Fig. \ref{fig 13}. Some of the energy/data and channel gain realizations lead to infeasible solutions. Therefore, the remaining energy in the battery is averaged only over the feasible cases. When the average energy and data arrival rates are low, the optimal power allocation is mostly bursty, and the proposed online algorithms perform closer to the offline ones as seen in Fig. \ref{fig 11} and \ref{fig 13}. However, as the energy/data rate increases the information about the future events becomes more significant, and the lack of information on the future energy/data arrivals leads to a degradation in the performance of the online algorithm as well as the dynamic programming based policy as seen in Fig. \ref{fig 11} and \ref{fig 13}.

\section{Conclusions}
\label{conclude}
We have studied a broadband energy harvesting communication system modelled as having $K$ parallel fading channels by considering both the transmission and processing energy for each sub-channel. We have identified the optimal offline transmission policies for three different objectives; maximization of the transmitted data by a deadline, maximization of the remaining energy in the battery by a deadline and minimization of the TCT of all the arriving data packets. For the throughput and energy maximization problems we have formulated a convex optimization problem and identified properties of the optimal transmission policies. We have then discussed the relation between the energy maximization and the TCT minimization problems. We have also provided algorithms which compute the optimal transmission policies for all the three problems. Moreover, for the case the energy/data arrivals and channel gains are known causally, we have suggested myopic online algorithms for throughput and energy maximization. We have shown that the proposed low-complexity online algorithms perform close to the dynamic programming solution and the offline policies at low energy/data arrival rates. Finally, numerical results have been presented to illustrate the effect of the processing cost on the optimal transmission policies and the performance in both the offline and online settings.

\end{document}